\newcolumntype{L}[1]{>{\raggedright\arraybackslash}p{#1}}
\newcolumntype{C}[1]{>{\centering\arraybackslash}p{#1}}
\newcolumntype{R}[1]{>{\raggedleft\arraybackslash}p{#1}}
\definecolor{kkk}{RGB}{165,42,42}
\newlength{\figwidth}
\begin{document}
	
\setlength{\pdfpagewidth}{8.5in}
\setlength{\pdfpageheight}{11in}
\title{Outage Analysis of $2\times2 $ MIMO-MRC in Correlated Rician Fading}
\author{
\IEEEauthorblockN{
Prathapasinghe~Dharmawansa\IEEEauthorrefmark{1}, 
Kumara~Kahatapitiya\IEEEauthorrefmark{1},
Saman~Atapattu\IEEEauthorrefmark{3}, and 
Chintha~Tellambura\IEEEauthorrefmark{4}
}
\IEEEauthorblockA{
\IEEEauthorrefmark{1}Department of Electronic and Telecommunication Engineering, University of Moratuwa, Moratuwa, Sri Lanka \\
\IEEEauthorrefmark{3}Department of Electrical and Electronic Engineering, University of Melbourne, Victoria, Australia \\
\IEEEauthorrefmark{4}Department of Electrical and Computer Engineering, University of Alberta, Edmonton, Canada
}
}
	
\maketitle
\begin{abstract}
This paper addresses one of the classical problems in random matrix theory-- finding the distribution of  the maximum eigenvalue of the correlated Wishart unitary ensemble. In particular, we derive  a new exact expression for  the cumulative distribution function (c.d.f.) of the maximum eigenvalue of a $2\times 2$ correlated non-central Wishart matrix with rank-$1$ mean. By using this  new result, we derive an exact analytical expression for the outage probability of  $2\times 2$ multiple-input multiple-output maximum-ratio-combining (MIMO-MRC) in   Rician fading with transmit correlation and a strong line-of-sight (LoS) component (rank-$1$ channel mean). We also show  that the outage performance is affected by the relative alignment of the eigen-spaces of the mean and correlation matrices.  In general, when the LoS path aligns with the least eigenvector of the correlation matrix, in the {\it high} transmit signal-to-noise ratio  (SNR) regime, the outage gradually improves with the increasing correlation. Moreover, we show that  as $K$ (Rician factor) grows large, the outage event can be approximately characterized by the c.d.f. of a certain Gaussian random variable. 
\end{abstract}

\begin{IEEEkeywords}
Maximum eigenvalue, MIMO-MRC, Non-central Wishart matrix, outage probability, Rician fading
\end{IEEEkeywords}

\IEEEpeerreviewmaketitle

\section{Introduction}
Multiple-input multiple-output (MIMO) systems are critical  for 3G (third generation)  and current 4G (fourth generation)   wireless networks\cite{Tse2003it}. They are  also a key enabler  for 5G (fifth generation)  and beyond wireless networks to  accommodate the ever-increasing data demand   \cite{Marzetta2010twc}.
For example,   in cell-free massive MIMO, where a  large number of access points are distributed over a wide area, users and access point may have only one or  two  antennas \cite{Ngo2017twc}.  Therefore, the  simplest $2\times 2$ MIMO channel becomes the most common channel and hence  is   of paramount importance  for small-cell  wireless networks\cite{Diab2016}. Moreover, for  such small-size MIMO channels, accurate channel state information (CSI) estimation is fully  viable, which enables the use of  MIMO maximum-ratio-combining (MRC) receivers. 

The performance of MIMO-MRC has been extensively analyzed   for an  arbitrary number of transmit and receive antennas for  different channel models (\cite{dighe2003analysis,rao2003tcom,mckay2007performance,jin2008mimo,wu2016asymptotic} and references therein). However, exact analytical characterization of the MIMO-MRC over the correlated Rician channel remains elusive. This challenge arises because  the  $ 2 \times 2 $ correlated non-central Wishart matrix does not admit analytically tractable joint eigenvalue density \cite{ratnarajah2003topics}. 
While general MIMO systems are considered  in the literature, 
only  few studies have focused on small-scale practical MIMO architectures \cite{Cioffi2008it, McKay2017tvt, McKay2018vt, Paulraj2007asilomar}. For instance, \cite{Cioffi2008it} derives the channel capacity of $2 \times 2$ or $2 \times 3$ MIMO channel  for Nakagami-$m$ fading. 
In  \cite{McKay2017tvt}, the statistical properties of the Gram matrix ${\bf W = HH^\dagger}$, where ${\bf H}$ is a $2 \times 2$ complex central Gaussian matrix whose elements have arbitrary variances, have been investigated, resulting in exact distributions of ${\bf W}$ and its eigenvalues.  In \cite{McKay2018vt}, the exact and asymptotic  largest eigenvalue distributions of ${\bf W}$ are derived when ${\bf H}$ is a complex Gaussian matrix with unequal variances in the real and imaginary parts of its entries, or equivalently   ${\bf H}$ belongs to the non-circularly-symmetric Gaussian subclass. These results have been then leveraged to analyze the outage performance of multi-antenna systems with MRC over Nakagami-$q$ (Hoyt) fading.


To the best of our knowledge, \textbf{no  exact performance analysis is available} for  the important case of $2 \times 2$ MIMO  correlated Rician fading channel with   rank-$1$ channel mean, i.e., a  strong line-of-sight (LoS) path exists between transmitter and receiver \cite{nabar2005diversity,kang2006capacity}. 
To develop exact analytical results, we must first statistically characterize the maximum eigenvalue of the  correlated non-central Wishart matrix. However, the distribution of this  maximum eigenvalue  remains an open problem in both wireless  and even wider statistics literature. 
The main technical challenge is  that the joint eigenvalue density of the correlated non-central Wishart matrix has no tractable representation \cite{ratnarajah2003topics}. Since  the invariant polynomial representation for the joint eigenvalue density given in \cite{ratnarajah2003topics} is not amenable to further manipulations, 
the direct technique of obtaining marginal densities by integrating the joint density of the eigenvalues over a suitable multi-dimensional region
cannot be applied in the problem at hand.  To circumvent this difficulty, here we follow an alternative approach \cite{dharmawansa2011extreme},
where  the matrix variate density is directly integrated  instead of the joint eigenvalue density, in order to find the  maximum-eigenvalue distribution. 

Specifically,  we first derive  an exact c.d.f. of the maximum eigenvalue of a $2\times 2$ complex correlated none-central Wishart matrix with rank-1 mean, which is the key contribution of this paper. Some recent results on the maximum eigenvalue are symbolic,  and not amenable to further processing, e.g., \cite{dharmawansa2011extreme,wu2016asymptotic}.
We thus believe this to be  the first tractable exact result on the maximum eigenvalue of such matrices. Subsequently, we also provide an exact expression for the outage probability. 
Additionally, we also characterize the effect of the Rician factor $K$ on the outage probability for different signal-to-noise ratio (SNR) regimes and  establish stochastic convergence limit of  outage probability for large $K$ values. Moreover, the effect of the eigenspaces of the mean and the correlation matrices on the outage probability is  discussed in detail.

The following notation is used throughout this paper.  The superscript $(\cdot)^\dagger$ indicates the Hermitian-transpose, and $(\cdot)^T$ stands for the matrix transpose. We use $\text{det}(\cdot)$ to represent the determinant of a square matrix, $\text{tr}(\cdot)$ to represent trace, and $\text{etr}(\cdot)$ stands for $\exp\left(\text{tr}(\cdot)\right)$. Positive definiteness  of a square matrix $\mathbf{A}$ is represented by $\mathbf{A}\succ \mathbf{0}$, and $\mathbf{A}\succ \mathbf{B}$ denotes $\mathbf{A}-\mathbf{B}\succ \mathbf{0}$. The square root of a positive definite matrix $\mathbf{G}$ is denoted by $\mathbf{G}^{\frac{1}{2}}$ and $\text{diag}\{s_1, s_2\}$ denotes a $2\times 2$ diagonal matrix with the real diagonal entries $s_1$ and $s_2$. We use $\lambda_{\max}(\mathbf{A})$ and $\lambda_{\min}(\mathbf{A})$ to denote, respectively, the maximum and the minimum eigenvalues of a square matrix $\mathbf{A}$. The real and imaginary parts, modulus and conjugate  of a complex number $z$ are denoted, respectively, by $\Re(z),$  $\Im(z)$,  $|z|$ and  $z^*$. The Euclidean norm of a vector $\mathbf{w}$ is denotes by $||\mathbf{w}||$. $\lfloor x\rfloor$ denotes the floor function, defined as $\lfloor{x}\rfloor=\max \left\{m\in \mathbb{Z}|m\leq x\right\}$. Finally, the union of two measurable sets $\mathcal{R}_1$ and $\mathcal{R}_2$ is denoted by $\mathcal{R}_1\cup \mathcal{R}_2$. 

\section{CDF of the Maximum Eigenvalue}
\label{sec:cdfmaxeigen}
This section derives the  new expression for the distribution  of the maximum eigenvalue of a $2\times 2$ correlated complex non-central Wishart matrix with rank-1  mean matrix. Before proceeding with the derivations, we  present some fundamental statistical characteristics of the complex correlated non-central Wishart matrix.

\newtheorem{theorem}{Definition}
\begin{theorem}\label{Wishart_Def}
Let $\mathbf{X}$ be an $n\times m$ ($n \geq m$) complex Gaussian random matrix distributed as $\mathcal{CN}_{n,m}\left(\boldsymbol{\Upsilon},\mathbf{I}_n\otimes \boldsymbol{\Psi }\right)$, where $\boldsymbol{\Psi}\in \mathbb{C}^{m \times m}\succ\mathbf{0}$ and $\boldsymbol{\Upsilon}\in \mathbb{C}^{n\times m}$. Then $\mathbf{W}=\mathbf{X}^\dagger\mathbf{X}$ has a complex non-central Wishart distribution $\mathcal{W}_m\left(n,\boldsymbol{\Psi},\boldsymbol{\Theta}\right)$ with the density function \cite{james1964distributions}
\begin{align}
\label{wishart}
f\left(\mathbf{W}\right)& =
\frac{\text{etr}\left(-\boldsymbol{\Theta}\right)\det^{n-m}\left(\mathbf{W}\right)}{\tilde{\Gamma}_m(n)\det^n(\boldsymbol{\Psi})}
\; \nonumber\\
&\qquad\text{etr}\left(-\boldsymbol{\Psi}^{-1}\mathbf{W}\right) {}_0\tilde{F}_1\left(n;\boldsymbol{\Theta}\boldsymbol{\Psi}^{-1}\mathbf{W}\right)
\end{align}
where $\boldsymbol{\Theta}=\boldsymbol{\Psi}^{-1}\boldsymbol{\Upsilon}^\dagger\boldsymbol{\Upsilon}$ is the \textit{non-centrality} parameter, 
${}_0\tilde{F}_1(\cdot;\cdot)$ denotes Bessel type complex hypergeometric function of matrix argument and the complex multivariate gamma function is defined as
$
\tilde{\Gamma}_m(n)\stackrel{ \Delta}{=}\pi^{\frac{m(m-1)}{2}}\prod_{j=1}^{m}\Gamma(n-j+1)
$
with $\Gamma(\cdot)$ being the gamma function.
\end{theorem}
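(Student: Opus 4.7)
The statement is presented as a definition, but its substantive content---the explicit form of the density of $\mathbf{W}=\mathbf{X}^\dagger\mathbf{X}$---is the classical non-central Wishart derivation of James~\cite{james1964distributions}. The plan is to start from the matrix-variate complex Gaussian density of $\mathbf{X}$,
\[
f_{\mathbf{X}}(\mathbf{X})=\pi^{-nm}\det(\boldsymbol{\Psi})^{-n}\,\text{etr}\!\left(-\boldsymbol{\Psi}^{-1}(\mathbf{X}-\boldsymbol{\Upsilon})^\dagger(\mathbf{X}-\boldsymbol{\Upsilon})\right),
\]
expand the quadratic form via $(\mathbf{X}-\boldsymbol{\Upsilon})^\dagger(\mathbf{X}-\boldsymbol{\Upsilon})=\mathbf{X}^\dagger\mathbf{X}-\mathbf{X}^\dagger\boldsymbol{\Upsilon}-\boldsymbol{\Upsilon}^\dagger\mathbf{X}+\boldsymbol{\Upsilon}^\dagger\boldsymbol{\Upsilon}$, and pull the constant factor $\text{etr}(-\boldsymbol{\Theta})$ out front using $\boldsymbol{\Theta}=\boldsymbol{\Psi}^{-1}\boldsymbol{\Upsilon}^\dagger\boldsymbol{\Upsilon}$.

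Next I would change variables from $\mathbf{X}$ to $(\mathbf{W},\mathbf{U})$ via the polar-type decomposition $\mathbf{X}=\mathbf{U}\mathbf{W}^{1/2}$ with $\mathbf{W}=\mathbf{X}^\dagger\mathbf{X}\succ\mathbf{0}$ and $\mathbf{U}$ an element of the complex Stiefel manifold $V_{m,n}(\mathbb{C})$ (so that $\mathbf{U}^\dagger\mathbf{U}=\mathbf{I}_m$). The classical Jacobian for this decomposition yields $d\mathbf{X}=\det(\mathbf{W})^{n-m}\,d\mathbf{W}\,d\mathbf{U}$, which produces the determinantal factor $\det(\mathbf{W})^{n-m}$ appearing in the stated density, alongside the factor $\text{etr}(-\boldsymbol{\Psi}^{-1}\mathbf{W})$ that drops out immediately from the $\mathbf{X}^\dagger\mathbf{X}$ term in the exponent.

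The remaining cross terms couple $\mathbf{U}$ to $\boldsymbol{\Upsilon}$ through the factor $\text{etr}(\boldsymbol{\Psi}^{-1}\mathbf{W}^{1/2}\mathbf{U}^\dagger\boldsymbol{\Upsilon}+\boldsymbol{\Psi}^{-1}\boldsymbol{\Upsilon}^\dagger\mathbf{U}\mathbf{W}^{1/2})$. Integrating this over $V_{m,n}(\mathbb{C})$ is the canonical Stiefel integral representation of the Bessel-type complex hypergeometric function of matrix argument: one uses $\int_{V_{m,n}(\mathbb{C})}\text{etr}(\mathbf{U}^\dagger\mathbf{A}+\mathbf{A}^\dagger\mathbf{U})\,d\mathbf{U}$, which evaluates (up to the Stiefel volume that is absorbed into the normalizing constant $\tilde{\Gamma}_m(n)$) to ${}_0\tilde{F}_1(n;\mathbf{A}^\dagger\mathbf{A})$. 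Choosing $\mathbf{A}$ appropriately and exploiting the fact that ${}_0\tilde{F}_1$ is a symmetric function of the eigenvalues of its matrix argument (so $\mathbf{A}^\dagger\mathbf{A}$ may be cyclically rearranged without changing the value), the matrix argument collapses to $\boldsymbol{\Theta}\boldsymbol{\Psi}^{-1}\mathbf{W}$, and assembling all factors reproduces \eqref{wishart}.

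The main obstacle I anticipate is twofold: (i) verifying the Jacobian of the polar decomposition in the complex case with the correct power $n-m$ of $\det(\mathbf{W})$, and (ii) carefully tracking the $\boldsymbol{\Psi}^{-1}$ and $\boldsymbol{\Upsilon}$ factors through the Stiefel integral so that the argument of ${}_0\tilde{F}_1$ comes out in the canonical form $\boldsymbol{\Theta}\boldsymbol{\Psi}^{-1}\mathbf{W}$ rather than a similar-but-inequivalent conjugate. Both steps are standard in multivariate statistics but notation-heavy; once they are settled, the three signature components of the density---the determinantal factor, the exponential, and the ${}_0\tilde{F}_1$ coupling---emerge unambiguously.
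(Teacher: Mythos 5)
The paper offers no proof of this statement---it is presented as a Definition and quoted directly from James's classical work \cite{james1964distributions}---so there is no in-paper argument to compare against; your reconstruction (complex matrix Gaussian density, polar decomposition $\mathbf{X}=\mathbf{U}\mathbf{W}^{1/2}$ with Jacobian $\det(\mathbf{W})^{n-m}$, Stiefel integral yielding ${}_0\tilde{F}_1\left(n;\mathbf{A}^\dagger\mathbf{A}\right)$ with $\mathbf{A}=\boldsymbol{\Upsilon}\boldsymbol{\Psi}^{-1}\mathbf{W}^{1/2}$, followed by the eigenvalue-invariance rearrangement to $\boldsymbol{\Theta}\boldsymbol{\Psi}^{-1}\mathbf{W}$) is precisely the standard derivation from that reference and is correct. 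All three signature factors emerge as you describe, with the Stiefel volume absorbed into $\tilde{\Gamma}_m(n)$.
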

Next we define the joint eigenvalue distribution of $\mathbf{W}$ which consists of invariant polynomials due to Davis \cite{davis1979invariant, davis1981construction}.
\newtheorem{cor}{Corollary}
\begin{cor}
The joint density of ordered eigenvalues $\lambda_1>\lambda_2>......>\lambda_m>0$, of the complex non-central Wishart matrix $\mathbf{W}$ is given by \cite[Eq. 5.4]{ratnarajah2003topics}
\begin{equation}
\label{eigenpdf}
\begin{split}
g\left(\boldsymbol{\Lambda}\right)=&
\frac{\pi^{m(m-1)}\text{etr}\left(-\Theta\right)}{\tilde{\Gamma}_m(n)\tilde{\Gamma}_m(m)\text{det}^n(\boldsymbol{\Psi})}
 \prod_{k=1}^{m}\lambda_k^{n-m}\prod_{k<l}\left(\lambda_k-\lambda_l\right)^2\\
& \times
\sum_{k,t=0}^{\infty}\sum_{\kappa,\tau;\phi\in\kappa.\tau}
\frac{C_{\phi}^{\kappa,\tau}\left(-\boldsymbol{\Psi}^{-1},\boldsymbol{\Theta}\boldsymbol{\Psi}^{-1} \right)
C_{\phi}^{\kappa,\tau}\left(\boldsymbol{\Lambda},\boldsymbol{\Lambda}\right)}{k!t![n]_{\tau}C_{\phi}\left(\mathbf{I}_m\right)}
\end{split}
\end{equation}
\end{cor}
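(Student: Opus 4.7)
The approach I would take is to derive \eqref{eigenpdf} by transforming the matrix-variate density \eqref{wishart} under the eigendecomposition $\mathbf{W} = \mathbf{U}\boldsymbol{\Lambda}\mathbf{U}^\dagger$, where $\mathbf{U}\in U(m)$ and $\boldsymbol{\Lambda} = \text{diag}(\lambda_1,\ldots,\lambda_m)$ carries the ordered eigenvalues. For complex positive-definite Hermitian matrices, the Jacobian of this change of variables contributes the well-known factor $\tfrac{\pi^{m(m-1)}}{\tilde{\Gamma}_m(m)}\prod_{k<l}(\lambda_k-\lambda_l)^2$ together with the normalized Haar measure on $U(m)/U(1)^m$. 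Using $\det(\mathbf{W})^{n-m} = \prod_k \lambda_k^{n-m}$, the entire $\boldsymbol{\Lambda}$-dependent prefactor in \eqref{eigenpdf} is recovered immediately, and what remains is the unitary-group integral
\[
I(\boldsymbol{\Lambda}) = \int_{U(m)} \text{etr}\!\left(-\boldsymbol{\Psi}^{-1}\mathbf{U}\boldsymbol{\Lambda}\mathbf{U}^\dagger\right){}_0\tilde{F}_1\!\left(n;\boldsymbol{\Theta}\boldsymbol{\Psi}^{-1}\mathbf{U}\boldsymbol{\Lambda}\mathbf{U}^\dagger\right) d\mathbf{U}.
\]

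To evaluate $I(\boldsymbol{\Lambda})$, I would expand each factor in its complex zonal-polynomial series, namely $\text{etr}(-\boldsymbol{\Psi}^{-1}\mathbf{U}\boldsymbol{\Lambda}\mathbf{U}^\dagger) = \sum_{k\ge 0}\sum_\kappa \tfrac{1}{k!}C_\kappa(-\boldsymbol{\Psi}^{-1}\mathbf{U}\boldsymbol{\Lambda}\mathbf{U}^\dagger)$ and ${}_0\tilde{F}_1(n;\boldsymbol{\Theta}\boldsymbol{\Psi}^{-1}\mathbf{U}\boldsymbol{\Lambda}\mathbf{U}^\dagger) = \sum_{t\ge 0}\sum_\tau \tfrac{1}{t!\,[n]_\tau}C_\tau(\boldsymbol{\Theta}\boldsymbol{\Psi}^{-1}\mathbf{U}\boldsymbol{\Lambda}\mathbf{U}^\dagger)$, with the inner sums running over partitions $\kappa$ of $k$ and $\tau$ of $t$, respectively. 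Multiplying these expansions term-by-term produces integrands that are products of two zonal polynomials with distinct leading matrices $\mathbf{A}_1 = -\boldsymbol{\Psi}^{-1}$ and $\mathbf{A}_2 = \boldsymbol{\Theta}\boldsymbol{\Psi}^{-1}$ but a common conjugation $\mathbf{U}\boldsymbol{\Lambda}\mathbf{U}^\dagger$; this is precisely the setting in which Davis's invariant polynomials $C_\phi^{\kappa,\tau}(\cdot,\cdot)$ appear as the natural basis.

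The key analytical step is then the group-integral identity
\[
\int_{U(m)} C_\kappa\!\left(\mathbf{A}_1\mathbf{U}\boldsymbol{\Lambda}\mathbf{U}^\dagger\right)C_\tau\!\left(\mathbf{A}_2\mathbf{U}\boldsymbol{\Lambda}\mathbf{U}^\dagger\right) d\mathbf{U} = \sum_{\phi\in\kappa\cdot\tau}\frac{C_\phi^{\kappa,\tau}(\mathbf{A}_1,\mathbf{A}_2)\,C_\phi^{\kappa,\tau}(\boldsymbol{\Lambda},\boldsymbol{\Lambda})}{C_\phi(\mathbf{I}_m)},
\]
due to Davis \cite{davis1979invariant,davis1981construction}. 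Substituting this into the double series and collecting the prefactors $1/(k!\,t!\,[n]_\tau)$ yields exactly the triple sum $\sum_{k,t}\sum_{\kappa,\tau;\,\phi\in\kappa\cdot\tau}$ that appears in \eqref{eigenpdf}; matching against the Jacobian prefactor completes the derivation.

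The main obstacle is not any single explicit computation but the invariant-polynomial machinery itself: one must accept (or cite) Davis's construction of $C_\phi^{\kappa,\tau}$ as the canonical basis for expanding products of two zonal polynomials in different matrices, together with the corresponding group-integral identity above. Once this is in place, the remaining work is bookkeeping of normalization constants, and the full argument can be traced through \cite[Ch.~5]{ratnarajah2003topics}, from which \eqref{eigenpdf} is quoted.
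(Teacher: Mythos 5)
Your derivation is correct and is essentially the argument behind the cited result: the paper itself offers no proof of this corollary beyond the reference to \cite[Eq.~5.4]{ratnarajah2003topics}, and the route you sketch (eigendecomposition Jacobian $\pi^{m(m-1)}\prod_{k<l}(\lambda_k-\lambda_l)^2/\tilde{\Gamma}_m(m)$, zonal expansion of both $\text{etr}(-\boldsymbol{\Psi}^{-1}\mathbf{U}\boldsymbol{\Lambda}\mathbf{U}^\dagger)$ and ${}_0\tilde{F}_1$, followed by Davis's splitting formula for the unitary-group integral of a product of two zonal polynomials) is precisely how that source obtains the formula. All the normalization constants you identify match \eqref{eigenpdf}, so there is nothing to add.
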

\noindent where $\boldsymbol{\Lambda}$ is the  diagonal matrix having
eigenvalues of $\mathbf{W}$ along the main diagonal,    $C_{\phi}^{\kappa,\tau}\left(\cdot,\cdot \right)$ denotes  an invariant polynomial with two matrix arguments  \cite{davis1979invariant, davis1981construction} and the complex hypergeometric coefficient $[n]_{\tau}$ is defined as
$
[n]_{\tau}=\prod_{j=1}^{m}\left(n-j+1\right)_{\tau_j}
$
 in which $\tau=\left(\tau_1,\tau_2,....,\tau_m\right)$ is a partition of the integer $t$ into $m$ parts such that $\sum_{j=1}^m\tau_j=t $ and $\tau_1\geq\tau_2\geq .....\tau_m\geq 0$. Also, $(a)_k$ is the Pochhammer symbol given by 
$
(a)_k=\frac{\Gamma(a+k)}{\Gamma(a)}
$ with $(a)_0=1$.
\begin{figure*}
\begin{align*}
\mathcal{I}_k(x) =\sum_{p=0}^{\lfloor \frac{k}{2}\rfloor}\sum_{j=0}^{k-2p}
\frac{a_1(k,p,j)}{\left(x\sigma_2\right)^{j+p+2}}(j+p+1)! 
&\left[ \frac{{}_1F_1\left(1;c_{k,p,j}+3;x\sigma_1\right)}{(c_{k,p,j}+2)} - \sum_{i=0}^{j+p+1}\frac{(c_{k,p,j}+1)!}{(c_{k,p,j}+i+2)!}
\left(x\sigma_2\right)^{i}\right.\\& \left.\qquad \qquad \qquad \qquad \qquad \qquad \qquad \qquad \times {}_1F_1\left(i+1;c_{k,p,j}+i+3;x(\sigma_1-\sigma_2)\right)\right],
\end{align*}
\begin{align*}
\mathcal{J}_k(x) &=\sum_{p=0}^{\lfloor \frac{k}{2}\rfloor} \sum_{j=0}^{k-2p}\Biggl[a_2(k,p,j)\exp(-x\sigma_2) {}_1F_1\left(p+2;c_{k,p,j}+3;x\sigma_1\right)
{}_1F_1\left(p+2;j+p+3;x\sigma_2\right)\nonumber\\
& \qquad \left.+\sum_{l=0}^{p+1}\sum_{q=0}^{j+l}\frac{a_3(k,p,j,l,q)}{\left(x\sigma_2\right)^{j+l+1-q}} 
{}_1F_1\left(p+q+2;c_{k,p,j}+q+3;x(\sigma_1-\sigma_2)\right)-\sum_{l=0}^{p+1}\frac{a_4(k,p,j,l)}{\left(x\sigma_2\right)^{j+l+1}} {}_1F_1\left(p+2;c_{k,p,j}+3;x\sigma_1\right)\right]\nonumber
\end{align*}
\hrulefill
\end{figure*}

\subsection{Cumulative Distribution Function of the Maximum Eigenvalue}
Here,  to derive the c.d.f. of the maximum eigenvalue, the most straightforward method is to find the probability that the interval $[x,\infty)$ is free from the eigenvalues \cite{forrester2007eigenvalue, mehta2004random, zanella2009marginal, matthaiou2010condition}. As such, we may write
\begin{equation}
\begin{split}
F_{\lambda_{\text{max}}}(x) & =\Pr\left(\lambda_m<\lambda_{m-1}<\ldots<\lambda_1<x\right) \\
& =\int_{\mathcal{D}}g\left(\boldsymbol{\Lambda}\right) \rm d\lambda_1\ldots \rm d\lambda_m
\end{split}
\end{equation}
where $\mathcal{D}=\left\{0<\lambda_m<\ldots<\lambda_1<x\right\}$. This direct method,  however, is cumbersome because of  the invariant polynomials in eq (\ref{eigenpdf}). Despite their theoretical significance and frequent appearance in multivariate distribution theory, the invariant polynomials do not seem to admit simple forms in terms of the eigenvalues of its argument matrices even for the $2\times 2$ case \cite{davis1979invariant, davis1981construction}. To circumvent this difficulty, we will adopt an alternative approach based on integrating directly over matrix variate distribution, instead of the joint eigenvalue distribution \cite{constantine1963some, davis1979invariant, koev2006efficient, mathai1997jacobians, ratnarajah2003topics, dharmawansa2011extreme}. More precisely, we  write the c.d.f. of the maximum eigenvalue as
\begin{equation}
\label{initial}
F_{\lambda_{\text{max}}}(x)=\Pr\left(\mathbf{W}\prec x\mathbf{I}_m\right)
\end{equation}
which facilitates the use of the probability density function of $\mathbf{W}$ instead of its eigenvalue distribution.
Now we apply the change of variable $\mathbf{W}=x\mathbf{Z}$, where $\mathbf{Z}$ is Hermitian positive definite with ${\rm{d}} \mathbf{W}=x^{m^2}\rm d \mathbf{Z}$, in (\ref{initial}) to yield
\begin{equation}
\label{eq_maxeigint}
\begin{split}
F_{\lambda_{\text{max}}}(x) & =
\frac{x^{mn}\text{etr}\left(-\boldsymbol{\Theta}\right)}{\tilde{\Gamma}_m(n)\text{det}^n(\boldsymbol{\Psi})}\int_{\mathbf{0}\prec\mathbf{Z}\prec\;\mathbf{I}_m}
\text{det}^{n-m}(\mathbf{Z}) \\
& \qquad\text{etr}\left(-x\boldsymbol{\Psi}^{-1}\mathbf{Z}\right)
{}_0\tilde{F}_1\left(n;x\boldsymbol{\Theta}\boldsymbol{\Psi}^{-1}\mathbf{Z}\right)\rm d\mathbf{Z}.
\end{split}
\end{equation}
As shown in  \cite{ratnarajah2003topics} and \cite{dharmawansa2011extreme}, this integral does not admit simple form for arbitrary values of $m$ and $n$ even for rank one mean matrix. However, as we now show, it can be solved in terms of simple functions for the important configuration of $m=n=2$.\footnote{Although our general approach is valid even for $n>2$, here we focus on $n=2$ in view of obtaining a relatively not so complicated answer.} 

In the case of $m=n=2$, (\ref{eq_maxeigint}) simplifies to
\begin{equation}
\label{eq_maxeig22}
\begin{split}
F_{\lambda_{\text{max}}}(x)&=
\frac{x^{4}\text{etr}\left(-\eta\right)}{\pi\text{det}^2(\boldsymbol{\Psi})}\int_{\mathbf{0}\prec \mathbf{Z}\prec \;\mathbf{I}_2}
 \text{etr}\left(-x\boldsymbol{\Psi}^{-1}\mathbf{Z}\right) \\
 & \qquad
{}_0\tilde{F}_1\left(2;x\boldsymbol{\Theta}\boldsymbol{\Psi}^{-1}\mathbf{Z}\right)\rm d\mathbf{Z}
\end{split}
\end{equation}
where $\eta=\text{tr}\left(\boldsymbol{\Theta}\right)$. Observing the fact that the matrix $\boldsymbol{\Psi}$ is Hermitian positive definite having the eigen-decomposition $\boldsymbol{\Psi}=\mathbf{U}\boldsymbol{\Omega}\mathbf{U}^\dagger $, where $\mathbf{U}\in\mathbb{C}^{2\times 2}$ is unitary and $\boldsymbol{\Omega}=\text{diag}\{\omega_1,\omega_2\}$ with $\omega_1\geq \omega_2>0$, we can rewrite (\ref{eq_maxeig22}) with the help of  the transformation $\mathbf{Y}=\mathbf{U}^\dagger \mathbf{Z}\mathbf{U}$ as
\begin{equation}
\label{eq_maxtrformed}
\begin{split}
F_{\lambda_{\text{max}}}(x) & =
\frac{x^{4}}{\pi} \text{etr}\left(-\eta\right)(\sigma_1\sigma_2)^2\int_{\mathbf{0}\prec \mathbf{Y}\prec \;\mathbf{I}_2}
 \text{etr}\left(-x\boldsymbol{\Sigma}\mathbf{Y}\right) \\
 &\qquad
{}_0\tilde{F}_1\left(2;x\boldsymbol{\Sigma}\mathbf{U}^\dagger\boldsymbol{\Upsilon}^\dagger\boldsymbol{\Upsilon}\mathbf{U}\boldsymbol{\Sigma}\mathbf{Y}\right)\rm d\mathbf{Y}
\end{split}
\end{equation}
where $\boldsymbol{\Sigma}=\boldsymbol{\Omega}^{-1}=\text{diag}\{\sigma_1,\sigma_2\}$ and $0<\sigma_1\leq \sigma_2$. Observing the fact that $\boldsymbol{\Sigma}\mathbf{U}^\dagger\boldsymbol{\Upsilon}^\dagger\boldsymbol{\Upsilon}\mathbf{U}\boldsymbol{\Sigma}$ is Hermitian non-negative definite with rank one, it can be expressed via its eigen-decompsotion as
\begin{equation}
\label{eq_meandecom}
\boldsymbol{\Sigma}\mathbf{U}^\dagger\boldsymbol{\Upsilon}^\dagger\boldsymbol{\Upsilon}\mathbf{U}\boldsymbol{\Sigma}=\mu \boldsymbol{\alpha}\boldsymbol{\alpha}^\dagger
\end{equation}
where $\boldsymbol{\alpha}=(\alpha_{11}\;\; \alpha_{12})^T$ with $\alpha_{11},\alpha_{12}\in\mathbb{C}$, $\mu>0$ and $\boldsymbol{\alpha}^\dagger\boldsymbol{\alpha}=1$. This in turn gives the relation
\begin{equation}
\label{eq_mudef}
\mu=\text{tr}\left(\boldsymbol{\Theta\Psi}^{-1}\right).
\end{equation}
Therefore, we can write (\ref{eq_maxtrformed}) with the help of (\ref{eq_meandecom}) as
\begin{equation}
\begin{split}
F_{\lambda_{\text{max}}}(x)=&
\frac{x^{4}}{\pi} \text{etr}\left(-\eta\right)(\sigma_1\sigma_2)^2 \\
&\int_{\mathbf{0}\prec \mathbf{Y}\prec\;\mathbf{I}_2}
 \text{etr}\left(-x\boldsymbol{\Sigma}\mathbf{Y}\right)
{}_0\tilde{F}_1\left(2;x\mu \boldsymbol{\alpha}\boldsymbol{\alpha}^\dagger\mathbf{Y}\right)\rm d\mathbf{Y}.
\end{split}
\end{equation}
Further manipulation in this form is not desirable. However,  we expand the hypergeometric function with its equivalent zonal series expansion to yield \cite{james1964distributions}  
\begin{equation}
\label{eq_zonalexp}
\begin{split}
F_{\lambda_{\text{max}}}(x)=&
\frac{x^{4}}{\pi} \text{etr}\left(-\eta\right)(\sigma_1\sigma_2)^2\sum_{k=0}^\infty \sum_{\kappa}\frac{(x\mu)^k}{[2]_\kappa k!} \\
&\int_{\mathbf{0}\prec \mathbf{Y}\prec\;\mathbf{I}_2}
 \text{etr}\left(-x\boldsymbol{\Sigma}\mathbf{Y}\right)
C_\kappa\left(\boldsymbol{\alpha}\boldsymbol{\alpha}^\dagger\mathbf{Y}\right)\rm d\mathbf{Y}
\end{split}
\end{equation}
where $C_\kappa(\cdot)$ is the zonal polynomial in which $\kappa=(\kappa_1,\kappa_2)$ represents a partition of $k$ into not more than two parts such that $\kappa_1\geq \kappa_2\geq 0$ \cite{james1964distributions, takemura1984zonal}. Since the matrix $\boldsymbol{\alpha}\boldsymbol{\alpha}^\dagger\mathbf{Y}$ is rank one, we have $C_\kappa\left(\boldsymbol{\alpha}\boldsymbol{\alpha}^\dagger\mathbf{Y}\right)=0$ for all partitions $\kappa$ having more than one non-zero parts \cite{takemura1984zonal}. Therefore, the zonal polynomial degenerates to $C_\kappa\left(\boldsymbol{\alpha}\boldsymbol{\alpha}^\dagger\mathbf{Y}\right)=\left(\boldsymbol{\alpha}^\dagger\mathbf{Y}\boldsymbol{\alpha}\right)^k$. Hence we can simplify (\ref{eq_zonalexp}) to obtain
\begin{equation}
\label{cdf}
F_{\lambda_{\text{max}}}(x)=
\frac{x^{4}}{\pi} \text{etr}\left(-\eta\right)(\sigma_1\sigma_2)^2\sum_{k=0}^\infty \frac{(x\mu)^k}{(2)_k k!}\;Q_k(x)
\end{equation}
where
\begin{equation}
\label{eq_defq}
Q_k(x)
=\int_{\mathbf{0}\prec \mathbf{Y}\prec\;\mathbf{I}_2}
 \text{etr}\left(-x\boldsymbol{\Sigma}\mathbf{Y}\right)
\left(\boldsymbol{\alpha}^\dagger\mathbf{Y}\boldsymbol{\alpha}\right)^k\rm d\mathbf{Y}.
\end{equation}
This integral does not seem to have a simple solution in terms of simple functions according to the literature \cite{james1964distributions,ratnarajah2003topics}. Therefore, in the sequel, we demonstrate how to evaluate this integral in terms of hypergeometric  functions which in turn yield an exact expression for the c.d.f. of the maximum eigenvalue. 

The theorem below gives the c.d.f. of the maximum eigenvalue of a $2\times 2$ non-central Wishart matrix with two degrees of freedom.
\newtheorem{thm}{Theorem}
\begin{thm}\label{thm_maxeig}
Let $\mathbf{X}$ be a $2\times 2$ complex square matrix distributed as $\mathbf{X}\sim \mathcal{CN}_{2,2}\left(\boldsymbol{\Upsilon},\mathbf{I}_2\otimes\boldsymbol{\Psi}\right)$ with $\boldsymbol{\Upsilon}\in\mathbb{C}^{2\times 2}$ is rank one. Then the c.d.f. of the maximum eigenvalue $\lambda_{\text{max}}$ of the semi-correlated non-central complex Wishart matrix $\mathbf{W}=\mathbf{X}^\dagger\mathbf{X}$ is given by
\begin{align}
\label{eq_cdf}
F_{\lambda_{\text{max}}}(x)&=\left(\sigma_1\sigma_2\right)^2x^4\exp(-\sigma_1 x-\eta)\nonumber\\
&\qquad \qquad \times \sum_{k=0}^\infty
\frac{(x\mu)^k}{(k+1)!}\left[\mathcal{I}_k(x)+\mathcal{J}_k(x)\right]
\end{align}
where $\mathcal{I}_k(x)$ and $\mathcal{J}_k(x)$ are given on top of the page 
with
\begin{equation*}
\begin{array}{ll}
 & a_1(k,p,j)=\frac{|\alpha_{11}|^{2c_{k,p,j}}|\alpha_{12}|^{2(p+j)}}{j!p!(p+1)!(c_{k,p,j}-p)!} \\
 & a_2(k,p,j)=\frac{(p+1)|\alpha_{11}|^{2c_{k,p,j}}|\alpha_{12}|^{2(p+j)}}{(j+p+2)!(c_{k,p,j}+2)!}\\
& a_3(k,p,j,l,q)=\frac{(-1)^l(p+1)(j+l)!|\alpha_{11}|^{2c_{k,p,j}}|\alpha_{12}|^{2(p+j)}}{j!\;l!(p+1-l)!(c_{k,p,j}+2)!} \\
 & a_4(k,p,j,l)=\frac{(-1)^l(j+l)!(p+q+1)! |\alpha_{11}|^{2c_{k,p,j}}|\alpha_{12}|^{2(p+j)}}{j!l!p!q!(p+1-l)! (c_{k,p,j}+q+2)!}
\end{array}
\end{equation*}
and $c_{k,p,j}=k-p-j$. Also, ${}_1F_1(\cdot;\cdot;\cdot)$ denotes the confluent hypergeometric function of the first kind, $\boldsymbol{\Theta}=\boldsymbol{\Psi}^{-1}\boldsymbol{\Upsilon}^\dagger\boldsymbol{\Upsilon}$, $\boldsymbol{\Psi}=\mathbf{U}\boldsymbol{\Omega}\mathbf{U}^\dagger $,
$\mu=\text{tr}\left(\boldsymbol{\Theta\Psi}^{-1}\right)$, $\eta=\text{tr}\left(\boldsymbol{\Theta}\right)$, 
$\boldsymbol{\Sigma}=\boldsymbol{\Omega}^{-1}=\text{diag}\{\sigma_1,\sigma_2\}$, 
$\boldsymbol{\Sigma}\mathbf{U}^\dagger\boldsymbol{\Upsilon}^\dagger\boldsymbol{\Upsilon}\mathbf{U}\boldsymbol{\Sigma}=\mu \boldsymbol{\alpha}\boldsymbol{\alpha}^\dagger$, and 
$\boldsymbol{\alpha}=(\alpha_{11}\; \alpha_{12})^T$. 
\end{thm}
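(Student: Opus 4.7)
The plan is to evaluate $Q_k(x)$ in equation~(\ref{eq_defq}) directly, by parametrizing $\mathbf{Y}$ entry-wise, integrating out the off-diagonal part in polar coordinates, and then splitting the remaining two-dimensional region at the diagonal $y_1+y_2=1$. Write $\mathbf{Y}=\begin{pmatrix}y_1 & z\\ z^* & y_2\end{pmatrix}$ with $y_1,y_2>0$ and $z=re^{i\theta}$, so that $\mathrm{d}\mathbf{Y}=r\,\mathrm{d}y_1\,\mathrm{d}y_2\,\mathrm{d}r\,\mathrm{d}\theta$ and the constraint $\mathbf{0}\prec\mathbf{Y}\prec\mathbf{I}_2$ becomes $0<y_1,y_2<1$ together with $r^2<\min\{y_1y_2,\,(1-y_1)(1-y_2)\}$. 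Since $\mathrm{tr}(\boldsymbol{\Sigma}\mathbf{Y})=\sigma_1y_1+\sigma_2y_2$ depends only on the diagonal, and $\boldsymbol{\alpha}^\dagger\mathbf{Y}\boldsymbol{\alpha}=|\alpha_{11}|^2y_1+|\alpha_{12}|^2y_2+2|\alpha_{11}||\alpha_{12}|r\cos\theta'$ after absorbing the phase of $\alpha_{11}^*\alpha_{12}$ into $\theta'$, the variables decouple in the right way.

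Next I would expand $(\boldsymbol{\alpha}^\dagger\mathbf{Y}\boldsymbol{\alpha})^k$ by the binomial theorem. The angular integral $\int_0^{2\pi}\cos^m\theta'\,\mathrm{d}\theta'$ vanishes unless $m=2p$, in which case it equals $2\pi(2p)!/(4^p(p!)^2)$; the radial integral is the elementary polynomial $\int_0^{r_{\max}}r^{2p+1}\,\mathrm{d}r=r_{\max}^{2p+2}/(2p+2)$. A second binomial expansion of $(|\alpha_{11}|^2y_1+|\alpha_{12}|^2y_2)^{k-2p}$ introduces the index $j$, the combination $c_{k,p,j}=k-p-j$, and precisely the weight $|\alpha_{11}|^{2c_{k,p,j}}|\alpha_{12}|^{2(p+j)}$ that appears in $a_1,\ldots,a_4$. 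A direct count identifies the combined numerical prefactor with $\pi k!\,a_1(k,p,j)$, which, together with $(2)_k=(k+1)!$, reproduces the $(k+1)!$ denominator in the statement.

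Only the $(y_1,y_2)$-integral remains, carrying the factor $r_{\max}^{2p+2}=\min\{y_1y_2,\,(1-y_1)(1-y_2)\}^{p+1}$; splitting at $y_1+y_2=1$ yields the $\mathcal{I}_k$ contribution from the sub-triangle (where $r_{\max}^2=y_1y_2$) and the $\mathcal{J}_k$ contribution from the super-triangle (where $r_{\max}^2=(1-y_1)(1-y_2)$). For Region~1, the inner $y_2$-integral from $0$ to $1-y_1$ is a lower incomplete gamma that I would write as the full gamma minus an $e^{-x\sigma_2(1-y_1)}$ tail; the outer $y_1$-integral reduces to confluent hypergeometric functions via the Euler representation $\int_0^1 t^{a-1}(1-t)^{b-1}e^{zt}\,\mathrm{d}t=B(a,b)\,{}_1F_1(a;a+b;z)$, and Kummer's transformation ${}_1F_1(a;c;-z)=e^{-z}{}_1F_1(c-a;c;z)$ is then used to pull out the common factor $e^{-x\sigma_1}$ and match the exact form of $\mathcal{I}_k(x)$ displayed at the top of the page.

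The hardest step will be the Region~2 computation that produces $\mathcal{J}_k(x)$, since the factor $(1-y_1)(1-y_2)$ prevents the integration from telescoping into a single Euler integral. My approach there would be to substitute $u=1-y_2$, which extracts an $e^{-x\sigma_2}$ prefactor and turns $(1-y_2)^{p+1}$ into $u^{p+1}$; integrating $u\in(0,y_1)$ then gives an incomplete gamma in $y_1$, while expanding $(1-y_1)^{p+1}$ binomially yields the index $l$ used in $a_3$ and $a_4$. Treating the completed-gamma portion together with the $e^{-x\sigma_2}$ factor through Kummer's transformation produces the product-of-two-${}_1F_1$'s piece weighted by $a_2$, while the partial-sum remainder generates the $a_3$- and $a_4$-pieces, with a further incomplete-gamma expansion introducing the summation index $q$. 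Combining both regions yields $Q_k(x)=\pi\,k!\,e^{-x\sigma_1}\bigl[\mathcal{I}_k(x)+\mathcal{J}_k(x)\bigr]$; substituting into~(\ref{cdf}) and using $(2)_k=(k+1)!$ produces exactly~(\ref{eq_cdf}).
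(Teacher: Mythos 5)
Your proposal follows the same route as the paper: integrate the matrix-variate density directly over $\mathbf{0}\prec\mathbf{Y}\prec\mathbf{I}_2$, parameterize $\mathbf{Y}$ entry-wise, and split the region at $y_{11}+y_{22}=1$ according to whether $y_{11}y_{22}$ or $(1-y_{11})(1-y_{22})$ is the binding constraint on $|y_{12}|^2$ — exactly the decomposition $\mathcal{R}_1\cup\mathcal{R}_2$ in the paper's appendix. Your bookkeeping checks out (the angular/radial integrals and the two binomial expansions do combine to $\pi k!\,a_1(k,p,j)$, and $(2)_k=(k+1)!$ recovers the stated denominator), and you in fact supply the evaluation of the four-dimensional integrals that the paper omits for space.
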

\begin{proof}
See Appendix A.
\end{proof}

\begin{figure}[t]
\centering
\includegraphics[width=0.4\textwidth]{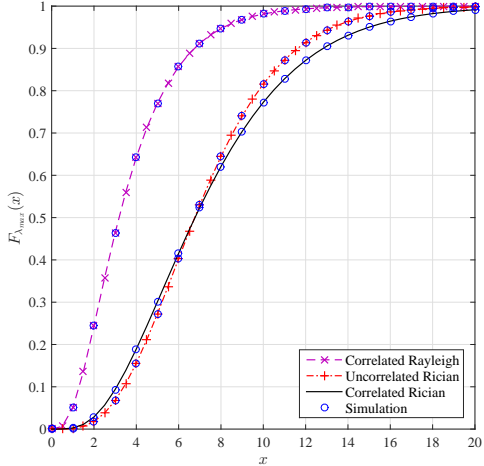}
\caption{Outage probability vs normalized SNR threshold $\gamma_{\text{th}}/\bar{\gamma}$ for correlated Rayleigh, uncorrelated/correlated Rician without $K$ normalization. 
}
\label{fig:cdf_k}
\end{figure}

It is remarkable that the above c.d.f. expression depends on, among other parameters, the components of the eigenvector of the rank one matrix $\boldsymbol{\Sigma}\mathbf{U}^\dagger\boldsymbol{\Upsilon}^\dagger\boldsymbol{\Upsilon}\mathbf{U}\boldsymbol{\Sigma}$, in contrast to corresponding expressions for correlated Rayleigh and uncorrelated Rician matrices which do not depend on the eigenvectors \cite{zanella2009marginal}. This represents the joint effect of correlation and mean. As a simple sanity  check of Theorem 1,  Fig. \ref{fig:cdf_k} shows the comparison of theoretical c.d.f. expression (\ref{eq_cdf}) and simulation results for the following parameters: 
\begin{align}
\label{eq_num_matrix}
\mathbf{\Upsilon}&=\left(\begin{array}{cc}
   1.0000+0.0000i \quad0.3624-0.9320i \\
   0.8878-0.4603i \quad-0.1073-0.9942i
\end{array}\right)\nonumber\\
\boldsymbol{\Psi}&=\left(\begin{array}{cc}
   1.0000+0.0000i \quad-0.3731-0.4902i \\
  -0.3731+0.4902i \quad1.0000+0.0000i
\end{array}\right).
\end{align}
It is also noteworthy that the infinite series has been truncated to a maximum of $15$ terms; thereby demonstrating a fast convergence rate for each case.

\section{Performance of $2\times 2$ MIMO Beamforming}
\label{sec:peranalysis}
To emphasize the utility of Theorem 1, here we focus on the performance of  $2\times 2$ MIMO MRC  over  a correlated Rician fading channel. In particular, we analyze an important performance metric - outage probability. 

Consider the following $2\times 2$ MIMO Rician channel model \cite{jin2008mimo, wu2016asymptotic}
\begin{equation}
	\label{channel_model_decom}
	\mathbf{H}=\sqrt{\frac{K}{K+1}}{\mathbf{\bar{H}}}+\sqrt[]{\frac{1}{K+1}}\mathbf{H_{\it sc}T}^{\frac{1}{2}}
	\end{equation}
where $ \mathbf{\bar{H}} \in \mathbb{C}^{2\times 2}$ is the deterministic component, $ \mathbf{H_{\it sc}} \sim \mathcal{CN}_{2,2}(\mathbf{0}_{2 \times 2},\mathbf{I}_{2}\otimes {\bf I}_{2}) $ represents the Rayleigh random component, and $\mathbf{T}\in\mathbb{C}^{2\times 2}\succ \mathbf{0}$ is the transmit correlation matrix with $K$ being the Rician factor. Moreover, we adapt the common normalization used in the literature to suit our requirement as $
\text{tr}\left(\mathbf{\bar H}^\dagger\mathbf{\bar H}\right)=4$ and $
\text{tr}\left(\mathbf{T}\right)=2$. It is noteworthy that in the presence of a strong LoS path between the transmitter and receiver (i.e., Rician fading), the rank of the matrix $\mathbf{\bar H}$ degenerates to one \cite{jin2008mimo, wu2016asymptotic}. Therefore, here we focus on the rank-1 $\bar{\mathbf{H}}$ case only.
Since $\mathbf{H}$ is a complex Gaussian random matrix, following Definition \ref{Wishart_Def}, $\mathbf{W}=\mathbf{H}^\dagger\mathbf{H}$ is  correlated complex non-central Wishart distributed.  Therefore, the corresponding covariance and the non-centrality parameter matrices can be written, respectively, as $\mathbf{\Psi}=(\frac{1}{K+1})\mathbf T$ and $\mathbf{\Theta=\Psi^{-1}\Upsilon^{\dagger}\Upsilon}=K{\mathbf T}^{-1}{\bf \bar{H}^\dagger\bar{H}}$.

Now consider a point-to-point MIMO link  with two transmit and two receive antennas. The received information vector $\mathbf{r}\in \mathbb{C}^{2\times 1}$ is given by
\begin{align*}
\mathbf{r}=\sqrt{P} \mathbf{H}\mathbf{w}s+\mathbf{n}
\end{align*}
where the channel $\mathbf{H}\in\mathbb{C}^{2\times 2}$ is given by (\ref{channel_model_decom}) with $\bar{\mathbf{H}}$ being {\it rank-1}, $P$ is the transmit power, $\mathbf{w}\in\mathbb{C}^{2\times 1}$ is the beamforming vector with $||\mathbf{w}||^2=1$, $s\in \mathbb{C}$ is the transmitted symbol with $\mathbb{E}\left\{|s|^2\right\}=1$, and $\mathbf{n}\in\mathbb{C}^{2\times 1}$ is the additive white Gaussian noise (AWGN) vector with zero mean and $N_0\mathbf{I}_2$ covariance matrix. We assume that both transmitter and the receiver have perfect instantaneous CSI.

A MIMO-MRC receiver determines $\mathbf{w}$ such that the received SNR
\begin{align}
\label{MRC_SNR}
\gamma=\bar{\gamma} \mathbf{w}^\dagger \mathbf{H}^\dagger \mathbf{H} \mathbf{w}
\end{align}
is maximized. Here $\bar{\gamma}=P/N_0$ denotes the transmit SNR. It is well documented that the vector $\mathbf{w}$ which maximizes (\ref{MRC_SNR}) is the leading eigenvector of $\mathbf{H}^\dagger\mathbf{H}$  \cite{dighe2003analysis}, \cite{kang2003comparative,kang2003impact,kang2003largest}. Therefore, the maximum received SNR is given by
\begin{align}
\label{eq_maxSNR}
\gamma=\bar{\gamma} \lambda_{\max}
\end{align}
where $\lambda_{\max}$ is the leading eigenvalue of $\mathbf{H}^\dagger\mathbf{H}$. 
This  clearly demonstrates that the performance of MIMO-MRC is tightly coupled with  the statistics of $\lambda_{\max}$. We next  focus on evaluating the outage probability of this system.

\subsection{Outage Probability}
\label{sec:outprobability}
The outage probability characterizes the quality of service (QoS) provided by the system, and is a more generic  performance measure of user experience. 
It is formally defined as the probability of $\gamma$ falls below a certain threshold value, $\gamma_{\text{th}}$, which determines the minimum SNR level for  satisfactory reception. Following Theorem \ref{thm_maxeig}, the outage probability can be written as
\begin{align}
\label{eq_outage}
P_{\text{out}}\left(\frac{\gamma_{\text{th}}}{\bar{\gamma}}\right) & =\Pr\left\{\gamma<\gamma_{\text{th}}\right\}=\Pr\left\{\lambda_{\max}<\frac{\gamma_{\text{th}}}{\bar{\gamma}}\right\} \nonumber \\
& =F_{\lambda_{\max}}\left(\frac{\gamma_{\text{th}}}{\bar{\gamma}}\right)
\end{align}
where $F_{\lambda_{\max}}(x)$ is given by (\ref{eq_cdf}) with the following re-parameterization:
\begin{align*}
\mu=K(K+1) \text{tr}\left(\mathbf{T}^{-1}\bar{\mathbf{H}}^\dagger \bar{\mathbf{H}}\mathbf{T}^{-1}\right),\; 
\eta=K\text{tr}\left(\mathbf{T}^{-1}\bar{\mathbf{H}}^\dagger \bar{\mathbf{H}}\right),
\end{align*}
and $\boldsymbol{\alpha}=\left(\alpha_{11}\; \alpha_{12}\right)^T$ denotes the leading eigenvector of the rank-1 matrix  $\boldsymbol{\Lambda}^{-1}\mathbf{U}^\dagger\bar{\mathbf{H}}^\dagger \bar{\mathbf{H}}\mathbf{U}\boldsymbol{\Lambda}^{-1}$. Here $\mathbf{U}$ and $\boldsymbol{\Lambda}$ are related to $\mathbf{T}$ through the eigen-decomposition $\mathbf{T}=\mathbf{U}\boldsymbol{\Lambda}\mathbf{U}^\dagger$ and $\boldsymbol{\Sigma}=\text{diag}(\sigma_1, \sigma_2)=(K+1)\boldsymbol{\Lambda}^{-1}$.

\begin{figure}[t]
\centering
\includegraphics[width=0.4\textwidth]{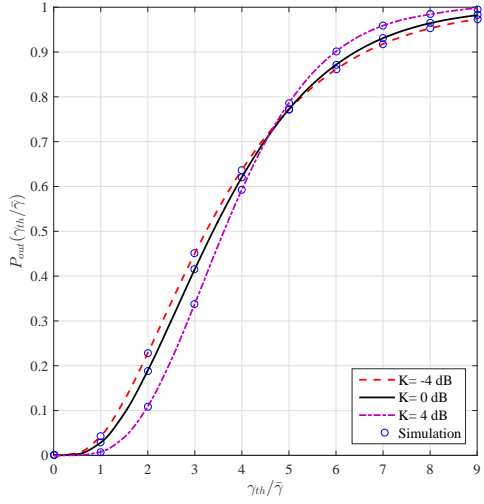}
\caption{Outage probability vs normalized SNR threshold $\gamma_{\text{th}}/\bar{\gamma}$ for different $K$ values.}
\label{fig:cdf_k_2}
\end{figure}

 The accuracy of (\ref{eq_outage}) is verified in Fig. \ref{fig:cdf_k_2}, which plots the outage probability versus the normalized SNR threshold $\gamma_{\text{th}}/\bar{\gamma}$ for different Rician factors  (different $K$ values). For numerical simulation purposes,  we have used $\mathbf{T}=\boldsymbol{\Psi}$ and $\bar{\mathbf{H}}=\boldsymbol{\Upsilon}$ matrices given in (\ref{eq_num_matrix}). Note that for  these analytical curves, the infinite sum (\ref{eq_cdf}) has been truncated to a maximum of $15$ terms,  thereby demonstrating a fast convergence rate for each case. The  figure displays a close match between the simulations and analytical results, which verifies the accuracy of the analytical derivations. A counterintuitive  trend is also visible - namely the outage does not uniformly decreases with the strength of the LoS component for all $\bar{\gamma}$ values. In particular, in the large $\bar{\gamma}$ regime (i.e., low $\gamma_{\text{th}}/\bar{\gamma}$ regime), outage improves with the increasing strength of the LoS component. In contrast, the opposite trend can be observed in the low $\bar{\gamma}$ regime. Therefore, when the transmit SNR is high, the outage is benefited by having a strong LoS component, whereas in the low transmit SNR regime, a rich scattering environment certainly helps improve the outage.  Thus,  from  an outage point of view and counterintuitively,   it is not always beneficial to have stronger LoS links.  
 
\subsection{The Effect of $K$}
We now investigate the effect of Rician factor $K$ on the outage probability. In particular, we focus on the large $K$ regime. The  following proposition characterizes the outage probability as $K\to \infty$:
\newtheorem{pro}{Proposition}
\begin{pro}\label{pro_asy_outage}
Let $\bar{\mathbf{H}^\dagger} \bar{\mathbf{H}}=4\mathbf{v}\mathbf{v}^\dagger$, where $||\mathbf{v}||=1$. Then we have, as $K\to\infty$
\begin{align}
\sqrt{\frac{K}{8\bar{\gamma}^2\;\mathbf{v}^{\dagger}\mathbf{T}\mathbf{v}}}\left(\gamma-4\bar{\gamma}\right)\xrightarrow{\mathcal{D}} \mathcal{N}(0,1)
\end{align}
where $\xrightarrow{\mathcal{D}} \mathcal{N}(0,1)$ denotes the convergence in distribution to a standard normal random variable.
\end{pro}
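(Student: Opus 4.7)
The plan is to analyze $\gamma=\bar{\gamma}\lambda_{\max}(\mathbf{H}^\dagger\mathbf{H})$ by a perturbation expansion in the small parameter $\epsilon=1/\sqrt{K+1}$. First I would substitute (\ref{channel_model_decom}) into $\mathbf{H}^\dagger\mathbf{H}$ and reorganize the result as
\begin{equation*}
\mathbf{H}^\dagger\mathbf{H}=(1-\epsilon^2)\bar{\mathbf{H}}^\dagger\bar{\mathbf{H}}+\epsilon\sqrt{1-\epsilon^2}\bigl(\bar{\mathbf{H}}^\dagger\mathbf{H}_{sc}\mathbf{T}^{1/2}+\mathbf{T}^{1/2}\mathbf{H}_{sc}^\dagger\bar{\mathbf{H}}\bigr)+\epsilon^2\mathbf{T}^{1/2}\mathbf{H}_{sc}^\dagger\mathbf{H}_{sc}\mathbf{T}^{1/2}.
\end{equation*}
The unperturbed matrix $\bar{\mathbf{H}}^\dagger\bar{\mathbf{H}}=4\mathbf{v}\mathbf{v}^\dagger$ has a simple top eigenvalue equal to $4$ with eigenvector $\mathbf{v}$, while the orthogonal eigenvalue is $0$; since the spectral gap is $O(1)$ and the perturbation has norm $O(\epsilon)$ almost surely (as $\mathbf{H}_{sc}$ is a fixed $2\times 2$ Gaussian matrix), standard analytic perturbation theory applies.

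Second, I would invoke the Rayleigh–Schr\"odinger first-order expansion for the largest eigenvalue,
\begin{equation*}
\lambda_{\max}=4+\epsilon\,\mathbf{v}^\dagger\bigl(\bar{\mathbf{H}}^\dagger\mathbf{H}_{sc}\mathbf{T}^{1/2}+\mathbf{T}^{1/2}\mathbf{H}_{sc}^\dagger\bar{\mathbf{H}}\bigr)\mathbf{v}+R_\epsilon,
\end{equation*}
where the remainder $R_\epsilon$ collects the $O(\epsilon^2)$ contributions (the explicit second-order term, the $-4\epsilon^2$ from $(1-\epsilon^2)$, and the $\epsilon^2$ Wishart correction). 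The leading stochastic term can be simplified using $\bar{\mathbf{H}}\mathbf{v}=2\mathbf{u}$ for some unit vector $\mathbf{u}\in\mathbb{C}^2$ (existence follows from $\|\bar{\mathbf{H}}\mathbf{v}\|^2=\mathbf{v}^\dagger\bar{\mathbf{H}}^\dagger\bar{\mathbf{H}}\mathbf{v}=4$), yielding
\begin{equation*}
\mathbf{v}^\dagger\bar{\mathbf{H}}^\dagger\mathbf{H}_{sc}\mathbf{T}^{1/2}\mathbf{v}=2\,\mathbf{u}^\dagger\mathbf{H}_{sc}\mathbf{T}^{1/2}\mathbf{v}=:2X.
\end{equation*}
Since $\mathbf{H}_{sc}$ has i.i.d. standard complex Gaussian entries, $X$ is zero-mean complex Gaussian with variance $\|\mathbf{u}\|^2\,\|\mathbf{T}^{1/2}\mathbf{v}\|^2=\mathbf{v}^\dagger\mathbf{T}\mathbf{v}$; hence $\Re(X)\sim\mathcal{N}\!\bigl(0,\tfrac{1}{2}\mathbf{v}^\dagger\mathbf{T}\mathbf{v}\bigr)$, and the first-order correction equals $4\Re(X)$.

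Third, multiplying through by $\bar\gamma$ gives
\begin{equation*}
\gamma-4\bar\gamma=\frac{4\bar\gamma}{\sqrt{K+1}}\Re(X)+\bar\gamma R_\epsilon,
\end{equation*}
so normalising by $\sqrt{K/(8\bar\gamma^2\mathbf{v}^\dagger\mathbf{T}\mathbf{v})}$ converts the first-order term into $\sqrt{K/(K+1)}\cdot\Re(X)/\sqrt{\tfrac{1}{2}\mathbf{v}^\dagger\mathbf{T}\mathbf{v}}$, which is exactly $\mathcal{N}(0,1)$ up to a $\sqrt{K/(K+1)}\to 1$ factor. The remainder, after the same normalisation, scales like $\sqrt{K}\cdot O(\epsilon^2)=O(K^{-1/2})$ in probability, hence tends to $0$. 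Slutsky's theorem then completes the argument.

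The main obstacle is the rigorous control of the $O(\epsilon^2)$ remainder: I need a deterministic bound of the form $|\lambda_{\max}-4-\epsilon\,\mathbf{v}^\dagger\mathbf{A}_1\mathbf{v}|\le C\epsilon^2\|\mathbf{A}_1\|^2/\mathrm{gap}+C\epsilon^2\|\mathbf{A}_2\|$ combined with the fact that $\|\mathbf{A}_1\|$ and $\|\mathbf{A}_2\|$ have finite moments (they are quadratic functionals of a fixed $2\times 2$ Gaussian matrix). Once these tail/moment bounds are in place, Slutsky closes the gap between the fluctuation statement and the stated convergence in distribution.
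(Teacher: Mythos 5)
The paper gives no proof of Proposition~1 (it is ``omitted due to space limitation''), so there is nothing internal to compare your argument against; judged on its own, your proof is correct. The decomposition of $\mathbf{H}^\dagger\mathbf{H}$ in powers of $\epsilon=1/\sqrt{K+1}$ is exact, the unperturbed matrix $4\mathbf{v}\mathbf{v}^\dagger$ has a simple top eigenvalue with an $O(1)$ spectral gap, the first-order term $4\Re(X)$ with $\Re(X)\sim\mathcal{N}\bigl(0,\tfrac12\mathbf{v}^\dagger\mathbf{T}\mathbf{v}\bigr)$ is computed correctly (and its variance reproduces exactly the scaling $8\bar\gamma^2\mathbf{v}^\dagger\mathbf{T}\mathbf{v}/K$ in the statement, as well as the limit \eqref{eq_conv_inter} quoted in the paper), and Slutsky disposes of the $\sqrt{K/(K+1)}$ factor and the remainder. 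The only step you rightly flag as needing care is the uniform control of $R_\epsilon$; the standard bound $\lvert\lambda_{\max}(\mathbf{A}_0+\mathbf{E})-\lambda_0-\mathbf{v}^\dagger\mathbf{E}\mathbf{v}\rvert\le C\lVert\mathbf{E}\rVert^2/\mathrm{gap}$ valid on the event $\lVert\mathbf{E}\rVert\le\mathrm{gap}/4$, together with the fact that the complementary event has vanishing probability (since $\lVert\mathbf{A}_1\rVert,\lVert\mathbf{A}_2\rVert$ are tight), gives $R_\epsilon=O_P(\epsilon^2)=o_P(\epsilon)$, which is all Slutsky requires. A slightly more elementary route, which avoids perturbation theory altogether and is natural in this $2\times2$ setting, is to write $\lambda_{\max}=\tfrac12\mathrm{tr}(\mathbf{H}^\dagger\mathbf{H})+\bigl(\tfrac14\mathrm{tr}^2(\mathbf{H}^\dagger\mathbf{H})-\det(\mathbf{H}^\dagger\mathbf{H})\bigr)^{1/2}$ explicitly and apply the delta method to the jointly Gaussian fluctuations of the trace and determinant; both routes yield the same limit, and yours has the advantage of generalizing beyond the $2\times2$ case.
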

\begin{proof}
Omitted due to space limitation.
\end{proof}
An immediate consequence of Proposition \ref{pro_asy_outage} is 
\begin{align}
\label{eq_conv_inter}
\lim_{K\to \infty}\Pr\left\{\sqrt{\frac{K}{8}}\left(\gamma-4\bar{\gamma}\right)\leq \gamma_{\text{th}}\right\}=\Phi\left(\frac{\gamma_{\text{th}}/\bar{\gamma}}{\sqrt{\mathbf{v}^{\dagger}\mathbf{T}\mathbf{v}}}\right)
\end{align}
where $\Phi(z)$ is the c.d.f of the standard normal random variable. Clearly, once properly centered and scaled, the asymptotic outage depends on the channel mean and correlation through the positive definite quadratic form $\mathbf{v}^\dagger \mathbf{T}\mathbf{v}$. It is noteworthy that this quadratic form depends on $\mathbf{T}$ but not its inverse. Therefore, the asymptotic outage expression remains valid even when $\mathbf{T}$ is {\it rank deficient}. Since we are interested in outage probability, we may use (\ref{eq_conv_inter}) to approximately characterize it, for sufficiently large $K$ values, as 
\begin{align}
\label{eq_approx_largeK}
P^{\text{Large}\;K}_{\text{out}}\left(\frac{\gamma_{\text{th}}}{\bar{\gamma}}\right)\approx
\Phi\left(\frac{\gamma_{\text{th}}/\bar{\gamma}-4}{\sqrt{\displaystyle 8\mathbf{v}^\dagger \mathbf{T} \mathbf{v}/K}}\right).
\end{align}
It is interesting to observe that the parameter $K$ is completely decoupled from the other parameters in \eqref{eq_approx_largeK}.

The effect of large $K$ on the outage probability is depicted in Fig. \ref{fig:large_k_approx_a}. As we  already know, for  $K\to \infty$, the maximum eigenvalue tends  to concentrate around $4$ (i.e, $\gamma$  around $4\bar{\gamma}$). Therefore, as  the figure shows, the outage curves bend more sharply at the critical point and ultimately converges to the vertical barrier at normalized SNR of  $4$. 
\begin{figure*}[t]
	\centering
	\subfloat[Formation of an outage barrier at $\gamma_{\text{th}}/\bar{\gamma}=4$.]{
		\label{fig:large_k_approx_a}
		\includegraphics[width=0.4\textwidth]{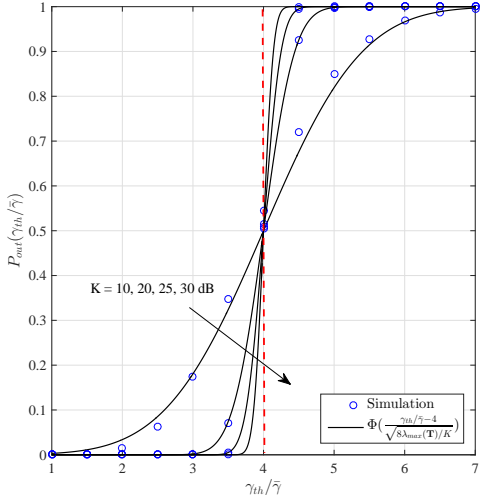}}      
        \subfloat[The joint effect of $\mathbf{T}$ and $\mathbf{v}$ on outage for large $K$. We have used $K=30$ dB for numerical evaluations.]{
		\label{fig:large_k_approx_b}
		\includegraphics[width=0.4\textwidth]{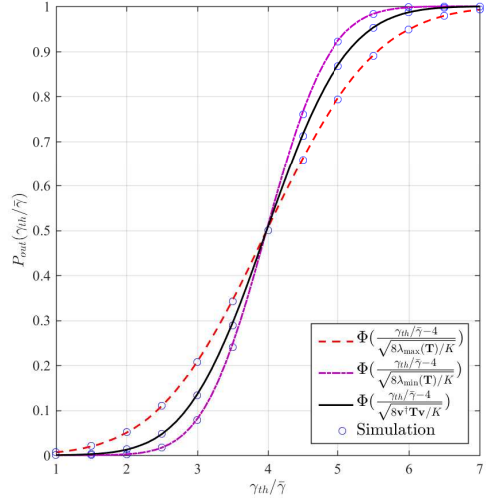}} 
        \caption{Asymptotically Gaussian behavior of outage as $K\to\infty$.}
	\label{f_FdHdb}
\end{figure*}

Let us now investigate the joint effect of $\mathbf{v}$ and $\mathbf{T}$ on the outage probability. To this end,  we first note that $ \mathbf{v}^\dagger\mathbf{T}\mathbf{v}$ is maximized when $\mathbf{v}$ aligns with the leading eigenvector of $\mathbf{T}$, whereas $ \mathbf{v}^\dagger\mathbf{T}\mathbf{v}$ is minimized when $\mathbf{v}$ aligns with the least eigenvector of $\mathbf{T}$. Therefore, when $\lambda_{\max}(\mathbf{T})$ and $\lambda_{\min}(\mathbf{T})$ represent, respectively, the maximum and the minimum eigenvalue of $\mathbf{T}$, we have the inequality
$
\lambda_{\min}(\mathbf{T})\leq \mathbf{v}^\dagger \mathbf{T} \mathbf{v}\leq
\lambda_{\max}(\mathbf{T}).
$
Thus, keeping in mind that $\Phi(z)$ is a monotonically increasing function, we have from (\ref{eq_approx_largeK}), for $\frac{\gamma_{\text{th}}}{\bar{\gamma}}<4$, 
$\Phi\left( \frac{\gamma_{\text{th}}/\bar{\gamma}-4}{\sqrt{ 8\lambda_{\min}(\mathbf{T})/K}}\right)<
\Phi\left( \frac{\gamma_{\text{th}}/\bar{\gamma}-4}{\sqrt{ 8\mathbf{v}^\dagger \mathbf{T} \mathbf{v}/K}}\right)
<\Phi\left( \frac{\gamma_{\text{th}}/\bar{\gamma}-4}{\sqrt{ 8\lambda_{\max}(\mathbf{T})/K}}\right)$, and for $ \frac{\gamma_{\text{th}}}{\bar{\gamma}}<4$, $\Phi\left( \frac{\gamma_{\text{th}}/\bar{\gamma}-4}{\sqrt{ 8\lambda_{\min}(\mathbf{T})/K}}\right)>
\Phi\left( \frac{\gamma_{\text{th}}/\bar{\gamma}-4}{\sqrt{ 8\mathbf{v}^\dagger \mathbf{T} \mathbf{v}/K}}\right)
>\Phi\left( \frac{\gamma_{\text{th}}/\bar{\gamma}-4}{\sqrt{ 8\lambda_{\max}(\mathbf{T})/K}}\right)$
with the critical normalized SNR threshold $\gamma_{\text{th}}/\bar{\gamma}=4$ satisfying the equality among the three quantities. Clearly, at the critical normalized SNR value, the outage curves undergo a phase transition. Therefore, for sufficiently large $K$, above the critical threshold, it is beneficial to have a $\mathbf{v}$ vector aligned with the leading eigenvector of the $\mathbf{T}$ from the outage point of view. In contrast, below the critical threshold, a $\mathbf{v}$ vector aligned with the least eigenvector of $\mathbf{T}$ improves the outage performance. 

Figure \ref{fig:large_k_approx_b}  numerically illustrates the above insights. As shown in the figure, below the critical threshold (i.e., high transmit SNR regime), alignment of the mean vector with the least eigenvalue of the transmit correlation matrix outperforms the other mode of alignments. Also the  inverse behavior is depicted above the critical threshold (i.e., low transmit SNR regime).

\section{Conclusion}
This paper has focused on characterizing $2\times 2$ MIMO-MRC system  over correlated  Rician fading  and a  strong LoS component (rank-$1$ mean) with respect to outage probability. 
We first derive a new expression for the c.d.f. of the maximum eigenvalue of a $2\times 2$ correlated non-central Wishart matrix with rank-$1$ non-centrality parameter. This expression, which in turn facilitates the derivation of the outage, contains a fast converging infinite series of functions. 
Our analysis demonstrates that as the Rician factor ($K$) grows large, the outage can be approximately characterized by the c.d.f. of a certain Gaussian random variable. Interestingly,  our analysis shows that  a strong LoS path is not always beneficial from the outage perspective.

\appendices

\section{Proof of Theorem \ref{thm_maxeig}}
Here we provide an outline of the main proof due to space limitations.

For clarity let us recall (\ref{eq_defq})
\begin{equation}
\label{eq_appendefq}
Q_k(x)
=\int_{\mathbf{0} \prec \mathbf{Y}\prec\;\mathbf{I}_2}
 \text{etr}\left(-x\boldsymbol{\Sigma}\mathbf{Y}\right)
\left(\boldsymbol{\alpha}^\dagger\mathbf{Y}\boldsymbol{\alpha}\right)^k\rm d\mathbf{Y}.
\end{equation}
 Clearly, this integral is performed over the space of Hermitian positive definite matrices $\mathbf{Y}\in\mathbb{C}^{2\times 2}$ such that $\mathbf{I}_2-\mathbf{Y}\succ\mathbf{0}$. This more complicated integral region in turn makes the above matrix integral intractable with the available tabulated results in the literature. Therefore, we now take a closer look at this particular integral region. To this end we first parameterize the Hermitian positive definite matrix $\mathbf{Y}$ as
 \begin{equation}
 \mathbf{Y}=\left(\begin{array}{cc}
 y_{11} & y_{12}\\
 y^*_{12} & y_{22}
 \end{array}\right)
 \end{equation}
where $y_{11},y_{22}>0$, $y_{12}\in\mathbb{C}$ and $y_{11}y_{22}-|y_{12}|^2>0$. Since we require $\mathbf{I}_2-\mathbf{Y}\succ\mathbf{0}$ to be satisfied, the above parameters should also fulfill
\begin{equation}
(1-y_{11})(1-y_{22})-|y_{12}|^2>0,\;\; y_{11},y_{22}\in(0,1).
\end{equation}
Now keeping in mind the representation $y_{12}=\Re y_{12}+\mathfrak{i}\Im y_{12}$, we can combine the above results to write the integration region corresponding to $\mathbf{0}\prec \mathbf{Y}\prec \;\mathbf{I}_2$ as
\begin{align}
\mathcal{R}=\mathcal{R}_1\cup \mathcal{R}_2
\end{align}
where
\begin{align*}
\mathcal{R}_1&=\left\{(y_{11},y_{22},\Re y_{12},\Im y_{12} ):
y_{11}+y_{22}<1,\right.\\
& \left.\qquad \qquad \quad |y_{12}|^2<y_{11}y_{22},\;y_{11},y_{22}\in(0,1) \right\}\\
\mathcal{R}_2&=\left\{(y_{11},y_{22},\Re y_{12},\Im y_{12} ): y_{11}+y_{22}>1,\right.\\
& \left.\qquad |y_{12}|^2<(1-y_{11})(1-y_{22}),\; y_{11},y_{22}\in(0,1)\right\}.
\end{align*}
Capitalizing on the above facts we can simplify the matrix integral in (\ref{eq_appendefq}) to yield the scalar form
\begin{align}
\label{eq_Qdecom}
Q_k(x)=
P_1(k,x)+P_2(k,x)
\end{align}
where
\begin{align*}
P_1(k,x)& =\int_{\mathcal{R}_1}
 \exp\left(-x\sigma_1y_{11}-x\sigma_2y_{22}\right)\nonumber\\
& \qquad  \times 
\left(|\alpha_{11}|^2y_{11}+|\alpha_{12}|^2y_{22}+2\Re[ \alpha^*_{11}\alpha_{12}y_{12}]\right)^k
{\rm d} \mathbf{Y}\\
P_2(k,x)& =\int_{ \mathcal{R}_2}
 \exp\left(-x\sigma_1y_{11}-x\sigma_2y_{22}\right)\nonumber\\
& \qquad  \times 
\left(|\alpha_{11}|^2y_{11}+|\alpha_{12}|^2y_{22}+2\Re[ \alpha^*_{11}\alpha_{12}y_{12}]\right)^k
{\rm d} \mathbf{Y}
\end{align*}
and we have used the differential relation ${\rm d} \mathbf{Y}={\rm d}y_{11}{\rm d}y_{22} {\rm d}\Re y_{12} {\rm d}\Im y_{12}$. The final answer follows by evaluating the above four-dimensional integrals.


\begin{thebibliography}{10}
\providecommand{\url}[1]{#1}
\csname url@samestyle\endcsname
\providecommand{\newblock}{\relax}
\providecommand{\bibinfo}[2]{#2}
\providecommand{\BIBentrySTDinterwordspacing}{\spaceskip=0pt\relax}
\providecommand{\BIBentryALTinterwordstretchfactor}{4}
\providecommand{\BIBentryALTinterwordspacing}{\spaceskip=\fontdimen2\font plus
\BIBentryALTinterwordstretchfactor\fontdimen3\font minus
  \fontdimen4\font\relax}
\providecommand{\BIBforeignlanguage}[2]{{%
\expandafter\ifx\csname l@#1\endcsname\relax
\typeout{** WARNING: IEEEtran.bst: No hyphenation pattern has been}%
\typeout{** loaded for the language `#1'. Using the pattern for}%
\typeout{** the default language instead.}%
\else
\language=\csname l@#1\endcsname
\fi
#2}}
\providecommand{\BIBdecl}{\relax}
\BIBdecl

\bibitem{dighe2003analysis}
P.~A. Dighe, R.~K. Mallik, and S.~S. Jamuar, ``Analysis of transmit-receive
  diversity in {R}ayleigh fading,'' \emph{{IEEE} Trans. Commun.}, vol.~51,
  no.~4, pp. 694--703, 2003.

\bibitem{Tse2003it}
L.~Zheng and D.~N.~C. Tse, ``Diversity and multiplexing: a fundamental tradeoff
  in multiple-antenna channels,'' \emph{{IEEE} Trans. Inform. Theory}, vol.~49,
  no.~5, pp. 1073--1096, May 2003.

\bibitem{Liu2012mag}
L.~Liu, R.~Chen, S.~Geirhofer, K.~Sayana, Z.~Shi, and Y.~Zhou, ``Downlink
  {MIMO} in {LTE}-advanced: {SU-MIMO} vs. {MU-MIMO},'' \emph{{IEEE} Commun.
  Mag.}, vol.~50, no.~2, pp. 140--147, Feb. 2012.

\bibitem{Gesbert2010jsac}
D.~Gesbert, S.~Hanly, H.~Huang, S.~S. Shitz, O.~Simeone, and W.~Yu,
  ``Multi-cell {MIMO} cooperative networks: A new look at interference,''
  \emph{{IEEE} J. Select. Areas Commun.}, vol.~28, no.~9, pp. 1380--1408, Dec.
  2010.

\bibitem{Marzetta2010twc}
T.~L. Marzetta, ``Noncooperative cellular wireless with unlimited numbers of
  base station antennas,'' \emph{{IEEE} Trans. Wireless Commun.}, vol.~9,
  no.~11, pp. 3590--3600, Nov. 2010.

\bibitem{Ngo2017twc}
H.~Q. Ngo, A.~Ashikhmin, H.~Yang, E.~G. Larsson, and T.~L. Marzetta,
  ``Cell-free massive {MIMO} versus small cells,'' \emph{{IEEE} Trans. Wireless
  Commun.}, vol.~16, no.~3, pp. 1834--1850, Mar. 2017.

\bibitem{Qualcomm}
\BIBentryALTinterwordspacing
Qualcomm, ``{Snapdragon 660 Mobile Platform}.'' [Online]. Available:
  \url{https://www.qualcomm.com/products/snapdragon-660-mobile-platform, 2017.}
\BIBentrySTDinterwordspacing

\bibitem{Diab2016}
A.~Diab, \emph{Self-Organized Mobile Communication Technologies and Techniques
  for Network Optimization}, 1st~ed.\hskip 1em plus 0.5em minus 0.4em\relax
  Hershey, PA, USA: IGI Global, 2016.

\bibitem{rao2001performance}
B.~D. Rao and M.~Yan, ``Performance of maximal ratio transmission with two
  receive antennas,'' in \emph{Signals, Systems and Computers, 2001. Conference
  Record of the Thirty-Fifth Asilomar Conference on}, vol.~2, 2001, pp.
  980--983.

\bibitem{kang2003comparative}
M.~Kang and M.-S. Alouini, ``A comparative study on the performance of {MIMO
  MRC} systems with and without co-channel interference,'' in \emph{Proc.
  {IEEE} Int. Conf. Commun. (ICC)}, vol.~3, 2003, pp. 2154--2158.

\bibitem{grant2005performance}
A.~J. Grant, ``Performance analysis of transmit beamforming,'' \emph{{IEEE}
  Trans. Commun.}, vol.~53, no.~4, pp. 738--744, 2005.

\bibitem{chen2005performance}
Y.~Chen and C.~Tellambura, ``Performance analysis of maximum ratio transmission
  with imperfect channel estimation,'' \emph{{IEEE} Commun. Lett.}, vol.~9,
  no.~4, pp. 322--324, 2005.

\bibitem{kang2003impact}
M.~Kang and M.-S. Alouini, ``Impact of correlation on the capacity of {MIMO}
  channels,'' in \emph{Proc. {IEEE} Int. Conf. Commun. (ICC)}, vol.~4, 2003,
  pp. 2623--2627.

\bibitem{zanella2005performance}
A.~Zanella, M.~Chiani, and M.~Z. Win, ``Performance of {MIMO MRC} in correlated
  {R}ayleigh fading environments,'' in \emph{Proc. {IEEE} Vehicular Technology
  Conf. (VTC)}, vol.~3, 2005, pp. 1633--1637.

\bibitem{mckay2006capacity}
M.~R. McKay, I.~B. Collings, and P.~J. Smith, ``Capacity and {SER} analysis of
  {MIMO} beamforming with {MRC},'' in \emph{Proc. {IEEE} Int. Conf. Commun.
  (ICC)}, vol.~3, 2006, pp. 1326--1330.

\bibitem{mckay2007performance}
M.~R. McKay, A.~J. Grant, and I.~B. Collings, ``Performance analysis of
  {MIMO-MRC} in double-correlated {R}ayleigh environments,'' \emph{{IEEE}
  Trans. Commun.}, vol.~55, no.~3, pp. 497--507, 2007.

\bibitem{kang2003largest}
M.~Kang and M.-S. Alouini, ``Largest eigenvalue of complex {W}ishart matrices
  and performance analysis of {MIMO MRC} systems,'' \emph{{IEEE} J. Select.
  Areas Commun.}, vol.~21, no.~3, pp. 418--426, 2003.

\bibitem{jin2008mimo}
S.~Jin, M.~R. McKay, X.~Gao, and I.~B. Collings, ``{MIMO} multichannel
  beamforming: {SER} and outage using new eigenvalue distributions of complex
  noncentral {W}ishart matrices,'' \emph{{IEEE} Trans. Commun.}, vol.~56,
  no.~3, 2008.

\bibitem{wu2016asymptotic}
Y.~Wu, R.~H. Louie, and M.~R. McKay, ``Asymptotic outage probability of
  {MIMO-MRC} systems in double-correlated {R}ician environments,'' \emph{{IEEE}
  Trans. Wireless Commun.}, vol.~15, no.~1, pp. 367--376, 2016.

\bibitem{hui2005analytical}
H.~T. Hui, T.~Zhang, and Y.~Lu, ``Analytical expression for the maximum ratio
  combiner {(MRC)} signal-to-noise ratio in correlated {R}ician fading channels
  due to antenna mutual coupling,'' in \emph{Proc. 2nd IASTED Int. Conf.
  Antennas Radar Wave Propog.}, 2005, pp. 180--184.

\bibitem{bithas2005dual}
P.~S. Bithas, G.~K. Karagiannidis, N.~C. Sagias, D.~A. Zogas, and P.~T.
  Mathiopoulos, ``Dual-branch diversity receivers over correlated {R}ician
  fading channels,'' in \emph{Proc. {IEEE} Vehicular Technology Conf. (VTC)},
  vol.~4, 2005, pp. 2642--2646.

\bibitem{yue2010generic}
D.-W. Yue and Q.~Zhang, ``Generic approach to the performance analysis of
  correlated transmit/receive diversity mimo systems with/without co-channel
  interference,'' \emph{{IEEE} Trans. Inform. Theory}, vol.~56, no.~3, pp.
  1147--1157, 2010.

\bibitem{ma2005impact}
Y.~Ma, ``Impact of correlated diversity branches in {R}ician fading channels,''
  in \emph{Proc. {IEEE} Int. Conf. Commun. (ICC)}, vol.~1, 2005, pp. 473--477.

\bibitem{yang2001mrc}
C.~Yang, G.~Bi, and A.~Leyman, ``{MRC} receiver performance with {MQAM} in
  correlated {R}ician fading channels,'' in \emph{Proc. 11th IEEE Signal
  Process. Workshop Stat. Signal Process}, 2001, pp. 210--212.

\bibitem{Cioffi2008it}
G.~Fraidenraich, O.~Leveque, and J.~M. Cioffi, ``On the {MIMO} channel capacity
  for the {N}akagami-$m$ channel,'' \emph{{IEEE} Trans. Inform. Theory},
  vol.~54, no.~8, pp. 3752--3757, Aug. 2008.

\bibitem{McKay2017tvt}
N.~Auguin, D.~Morales-Jimenez, and M.~R. McKay, ``Exact statistical
  characterization of $2\times 2$ {G}ram matrices with arbitrary variance
  profile,'' \emph{{IEEE} Trans. Veh. Technol.}, vol.~66, no.~9, pp.
  8575--8579, Sep. 2017.

\bibitem{McKay2018vt}
L.~Moreno-Pozas, D.~Morales-Jimenez, M.~R. McKay, and E.~Martos-Naya, ``Largest
  eigenvalue distribution of noncircularly-symmetric {W}ishart-type matrices
  with application to {H}oyt-faded {MIMO} communications,'' \emph{{IEEE} Trans.
  Veh. Technol.}, vol.~67, no.~3, pp. 2756--2760, Mar. 2018.

\bibitem{james1964distributions}
A.~T. James \emph{et~al.}, ``Distributions of matrix variates and latent roots
  derived from normal samples,'' \emph{Ann. Math. Statist.}, vol.~35, no.~2,
  pp. 475--501, 1964.

\bibitem{chikuse1986some}
Y.~Chikuse and A.~Davis, ``Some properties of invariant polynomials with matrix
  arguments and their applications in econometrics,'' \emph{Ann. Inst. Statist.
  Math.}, vol.~38, no.~1, pp. 109--122, 1986.

\bibitem{davis1979invariant}
A.~Davis, ``Invariant polynomials with two matrix arguments extending the zonal
  polynomials: Applications to multivariate distribution theory,'' \emph{Ann.
  Inst. Statist. Math.}, vol.~31, no.~1, pp. 465--485, 1979.

\bibitem{davis1981construction}
------, ``On the construction of a class of invariant polynomials in several
  matrices, extending the zonal polynomials,'' \emph{Ann. Inst. Statist.
  Math.}, vol.~33, no.~1, pp. 297--313, 1981.

\bibitem{ratnarajah2003topics}
T.~Ratnarajah, ``Topics in complex random matrices and information theory,''
  Ph.D. dissertation, University of Ottawa (Canada), 2003.

\bibitem{forrester2007eigenvalue}
P.~J. Forrester and T.~Nagao, ``Eigenvalue statistics of the real {G}inibre
  ensemble,'' \emph{Phys. Rev. Lett.}, vol.~99, no.~5, p. 050603, 2007.

\bibitem{mehta2004random}
M.~L. Mehta, \emph{Random {M}atrices}.\hskip 1em plus 0.5em minus 0.4em\relax
  Elsevier, 2004, vol. 142.

\bibitem{zanella2009marginal}
A.~Zanella, M.~Chiani, and M.~Z. Win, ``On the marginal distribution of the
  eigenvalues of {W}ishart matrices,'' \emph{{IEEE} Trans. Commun.}, vol.~57,
  no.~4, Apr. 2009.

\bibitem{matthaiou2010condition}
M.~Matthaiou, M.~R. McKay, P.~J. Smith, and J.~A. Nossek, ``On the condition
  number distribution of complex {W}ishart matrices,'' \emph{{IEEE} Trans.
  Commun.}, vol.~58, no.~6, pp. 1705--1717, Jun. 2010.

\bibitem{constantine1963some}
A.~Constantine, ``Some non-central distribution problems in multivariate
  analysis,'' \emph{Ann. Math. Statist.}, vol.~34, no.~4, pp. 1270--1285, 1963.

\bibitem{koev2006efficient}
P.~Koev and A.~Edelman, ``The efficient evaluation of the hypergeometric
  function of a matrix argument,'' \emph{Math. Comp.}, vol.~75, no. 254, pp.
  833--846, 2006.

\bibitem{mathai1997jacobians}
A.~M. Mathai, \emph{Jacobians of Matrix Transformations and Functions of Matrix
  Arguments}.\hskip 1em plus 0.5em minus 0.4em\relax World Scientific
  Publishing Company, 1997.

\bibitem{dharmawansa2011extreme}
P.~Dharmawansa and M.~R. McKay, ``Extreme eigenvalue distributions of some
  complex correlated non-central {W}ishart and gamma-{W}ishart random
  matrices,'' \emph{J.Multivariate Anal.}, vol. 102, no.~4, pp. 847--868, 2011.

\bibitem{takemura1984zonal}
A.~Takemura, ``Zonal polynomials,'' \emph{Lecture Notes-Monograph Series},
  vol.~4, pp. 1--104, 1984.

\bibitem{nabar2005diversity}
R.~U. Nabar, H.~Bolcskei, and A.~J. Paulraj, ``Diversity and outage performance
  in space-time block coded {R}icean {MIMO} channels,'' \emph{{IEEE} Trans.
  Wireless Commun.}, vol.~4, no.~5, pp. 2519--2532, 2005.

\bibitem{bateman1953higher}
H.~Bateman, A.~Erd\'elyi \emph{et~al.}, \emph{Higher Transcendental
  Functions}.\hskip 1em plus 0.5em minus 0.4em\relax McGraw-Hill New York,
  1953, vol.~1, no.~2.

\bibitem{van1998asymptotic}
A.~W. Van~der Vaart, \emph{Asymptotic {S}tatistics}.\hskip 1em plus 0.5em minus
  0.4em\relax Cambridge university press, 1998, vol.~3.

\end{thebibliography}


\begin{thebibliography}{10}
\providecommand{\url}[1]{#1}
\csname url@samestyle\endcsname
\providecommand{\newblock}{\relax}
\providecommand{\bibinfo}[2]{#2}
\providecommand{\BIBentrySTDinterwordspacing}{\spaceskip=0pt\relax}
\providecommand{\BIBentryALTinterwordstretchfactor}{4}
\providecommand{\BIBentryALTinterwordspacing}{\spaceskip=\fontdimen2\font plus
\BIBentryALTinterwordstretchfactor\fontdimen3\font minus
  \fontdimen4\font\relax}
\providecommand{\BIBforeignlanguage}[2]{{%
\expandafter\ifx\csname l@#1\endcsname\relax
\typeout{** WARNING: IEEEtran.bst: No hyphenation pattern has been}%
\typeout{** loaded for the language `#1'. Using the pattern for}%
\typeout{** the default language instead.}%
\else
\language=\csname l@#1\endcsname
\fi
#2}}
\providecommand{\BIBdecl}{\relax}
\BIBdecl

\bibitem{Tse2003it}
L.~Zheng and D.~N.~C. Tse, ``Diversity and multiplexing: a fundamental tradeoff
  in multiple-antenna channels,'' \emph{{IEEE} Trans. Inform. Theory}, vol.~49,
  no.~5, pp. 1073--1096, May 2003.

\bibitem{Marzetta2010twc}
T.~L. Marzetta, ``Noncooperative cellular wireless with unlimited numbers of
  base station antennas,'' \emph{{IEEE} Trans. Wireless Commun.}, vol.~9,
  no.~11, pp. 3590--3600, Nov. 2010.

\bibitem{Ngo2017twc}
H.~Q. Ngo, A.~Ashikhmin, H.~Yang, E.~G. Larsson, and T.~L. Marzetta,
  ``Cell-free massive {MIMO} versus small cells,'' \emph{{IEEE} Trans. Wireless
  Commun.}, vol.~16, no.~3, pp. 1834--1850, Mar. 2017.

\bibitem{Diab2016}
A.~Diab, \emph{Self-Organized Mobile Communication Technologies and Techniques
  for Network Optimization}, 1st~ed.\hskip 1em plus 0.5em minus 0.4em\relax
  Hershey, PA, USA: IGI Global, 2016.

\bibitem{dighe2003analysis}
P.~A. Dighe, R.~K. Mallik, and S.~S. Jamuar, ``Analysis of transmit-receive
  diversity in {R}ayleigh fading,'' \emph{{IEEE} Trans. Commun.}, vol.~51,
  no.~4, pp. 694--703, 2003.

\bibitem{rao2003tcom}
B.~D. Rao and M.~Yan, ``Performance of maximal ratio transmission with two
  receive antennas,'' \emph{{IEEE} Trans. Commun.}, vol.~51, no.~6, pp.
  894--895, Jun. 2003.

\bibitem{mckay2007performance}
M.~R. McKay, A.~J. Grant, and I.~B. Collings, ``Performance analysis of
  {MIMO-MRC} in double-correlated {R}ayleigh environments,'' \emph{{IEEE}
  Trans. Commun.}, vol.~55, no.~3, pp. 497--507, 2007.

\bibitem{jin2008mimo}
S.~Jin, M.~R. McKay, X.~Gao, and I.~B. Collings, ``{MIMO} multichannel
  beamforming: {SER} and outage using new eigenvalue distributions of complex
  noncentral {W}ishart matrices,'' \emph{{IEEE} Trans. Commun.}, vol.~56,
  no.~3, 2008.

\bibitem{wu2016asymptotic}
Y.~Wu, R.~H. Louie, and M.~R. McKay, ``Asymptotic outage probability of
  {MIMO-MRC} systems in double-correlated {R}ician environments,'' \emph{{IEEE}
  Trans. Wireless Commun.}, vol.~15, no.~1, pp. 367--376, 2016.

\bibitem{ratnarajah2003topics}
T.~Ratnarajah, ``Topics in complex random matrices and information theory,''
  Ph.D. dissertation, University of Ottawa (Canada), 2003.

\bibitem{Cioffi2008it}
G.~Fraidenraich, O.~Leveque, and J.~M. Cioffi, ``On the {MIMO} channel capacity
  for the {N}akagami-$m$ channel,'' \emph{{IEEE} Trans. Inform. Theory},
  vol.~54, no.~8, pp. 3752--3757, Aug. 2008.

\bibitem{McKay2017tvt}
N.~Auguin, D.~Morales-Jimenez, and M.~R. McKay, ``Exact statistical
  characterization of $2\times 2$ {G}ram matrices with arbitrary variance
  profile,'' \emph{{IEEE} Trans. Veh. Technol.}, vol.~66, no.~9, pp.
  8575--8579, Sep. 2017.

\bibitem{McKay2018vt}
L.~Moreno-Pozas, D.~Morales-Jimenez, M.~R. McKay, and E.~Martos-Naya, ``Largest
  eigenvalue distribution of noncircularly-symmetric {W}ishart-type matrices
  with application to {H}oyt-faded {MIMO} communications,'' \emph{{IEEE} Trans.
  Veh. Technol.}, vol.~67, no.~3, pp. 2756--2760, Mar. 2018.

\bibitem{Paulraj2007asilomar}
M.~Charafeddine and A.~Paulraj, ``Simplified eigenvalues distributions of
  $2\times 2$ complex noncentral {W}ishart,'' in \emph{Asilomar COnf. on
  Signals, Systems and computers}, Nov. 2007, pp. 1215--1218.

\newpage
\bibitem{nabar2005diversity}
R.~U. Nabar, H.~Bolcskei, and A.~J. Paulraj, ``Diversity and outage performance
  in space-time block coded {R}icean {MIMO} channels,'' \emph{{IEEE} Trans.
  Wireless Commun.}, vol.~4, no.~5, pp. 2519--2532, 2005.

\bibitem{kang2006capacity}
M.~Kang and M.~S. Alouini, ``Capacity of {MIMO} {R}ician channels,''
  \emph{{IEEE} Trans. Wireless Commun.}, vol.~5, no.~1, pp. 112--122, Jan.
  2006.

\bibitem{dharmawansa2011extreme}
P.~Dharmawansa and M.~R. McKay, ``Extreme eigenvalue distributions of some
  complex correlated non-central {W}ishart and gamma-{W}ishart random
  matrices,'' \emph{J.Multivariate Anal.}, vol. 102, no.~4, pp. 847--868, 2011.

\bibitem{james1964distributions}
A.~T. James \emph{et~al.}, ``Distributions of matrix variates and latent roots
  derived from normal samples,'' \emph{Ann. Math. Statist.}, vol.~35, no.~2,
  pp. 475--501, 1964.

\bibitem{davis1979invariant}
A.~Davis, ``Invariant polynomials with two matrix arguments extending the zonal
  polynomials: Applications to multivariate distribution theory,'' \emph{Ann.
  Inst. Statist. Math.}, vol.~31, no.~1, pp. 465--485, 1979.

\bibitem{davis1981construction}
------, ``On the construction of a class of invariant polynomials in several
  matrices, extending the zonal polynomials,'' \emph{Ann. Inst. Statist.
  Math.}, vol.~33, no.~1, pp. 297--313, 1981.

\bibitem{forrester2007eigenvalue}
P.~J. Forrester and T.~Nagao, ``Eigenvalue statistics of the real {G}inibre
  ensemble,'' \emph{Phys. Rev. Lett.}, vol.~99, no.~5, p. 050603, 2007.

\bibitem{mehta2004random}
M.~L. Mehta, \emph{Random {M}atrices}.\hskip 1em plus 0.5em minus 0.4em\relax
  Elsevier, 2004, vol. 142.

\bibitem{zanella2009marginal}
A.~Zanella, M.~Chiani, and M.~Z. Win, ``On the marginal distribution of the
  eigenvalues of {W}ishart matrices,'' \emph{{IEEE} Trans. Commun.}, vol.~57,
  no.~4, Apr. 2009.

\bibitem{matthaiou2010condition}
M.~Matthaiou, M.~R. McKay, P.~J. Smith, and J.~A. Nossek, ``On the condition
  number distribution of complex {W}ishart matrices,'' \emph{{IEEE} Trans.
  Commun.}, vol.~58, no.~6, pp. 1705--1717, Jun. 2010.

\bibitem{constantine1963some}
A.~Constantine, ``Some non-central distribution problems in multivariate
  analysis,'' \emph{Ann. Math. Statist.}, vol.~34, no.~4, pp. 1270--1285, 1963.

\bibitem{koev2006efficient}
P.~Koev and A.~Edelman, ``The efficient evaluation of the hypergeometric
  function of a matrix argument,'' \emph{Math. Comp.}, vol.~75, no. 254, pp.
  833--846, 2006.

\bibitem{mathai1997jacobians}
A.~M. Mathai, \emph{Jacobians of Matrix Transformations and Functions of Matrix
  Arguments}.\hskip 1em plus 0.5em minus 0.4em\relax World Scientific
  Publishing Company, 1997.

\bibitem{takemura1984zonal}
A.~Takemura, ``Zonal polynomials,'' \emph{Lecture Notes-Monograph Series},
  vol.~4, pp. 1--104, 1984.

\bibitem{kang2003comparative}
M.~Kang and M.-S. Alouini, ``A comparative study on the performance of {MIMO
  MRC} systems with and without co-channel interference,'' in \emph{Proc.
  {IEEE} Int. Conf. Commun. (ICC)}, vol.~3, 2003, pp. 2154--2158.

\bibitem{kang2003impact}
------, ``Impact of correlation on the capacity of {MIMO} channels,'' in
  \emph{Proc. {IEEE} Int. Conf. Commun. (ICC)}, vol.~4, 2003, pp. 2623--2627.

\bibitem{kang2003largest}
------, ``Largest eigenvalue of complex {W}ishart matrices and performance
  analysis of {MIMO MRC} systems,'' \emph{{IEEE} J. Select. Areas Commun.},
  vol.~21, no.~3, pp. 418--426, 2003.

\end{thebibliography}


\end{document}